\begin{document}

\title{Geometry of symmetric group-based models \thanks{Part of this work was completed, while D.Kosta was supported by a Daphne Jackson Trust Fellowship funded jointly by the Engineering and Physical Sciences Research Council and the University of Edinburgh. Part of this work was done during the Summer School on Algebra, Statistics and Combinatorics that was funded by the Aalto Science Institute research program on ``Challenges in Large Geometric Structures and Big Data''.}
}


\author{Dimitra Kosta         \and
        Kaie Kubjas 
}


\institute{D. Kosta \at
              School of Mathematics and Statistics, University of Glasgow, United Kingdom  \\
              Tel.: +44 141 330 6527\\
              \email{Dimitra.Kosta@glasgow.ac.uk}           
           \and
           K. Kubjas \at
              Department of Mathematics and Systems Analysis, Aalto University, Finland \\
              Tel.: +358 50 448 5183\\
              \email{kaie.kubjas@aalto.fi}
}

\date{Received: date / Accepted: date}

\maketitle

\begin{abstract}
Phylogenetic models have polynomial parametrization maps. For symmetric group-based models, Matsen studied the polynomial inequalities that characterize the joint probabilities in the image of these parametrizations~\cite{Matsen}. We employ this description for maximum likelihood estimation via numerical algebraic geometry. In particular, we explore an example where the maximum likelihood estimate does not exist, which would be difficult to discover without using algebraic methods. We also study the embedding problem for symmetric group-based models, i.e.~we identify which mutation matrices are matrix exponentials of rate matrices that are invariant under a group action.
\keywords{Phylogenetics \and group-based models \and maximum likelihood estimation \and numerical algebraic geometry \and algebraic statistics}
\end{abstract}

\section{Introduction}
\label{intro}
A phylogenetic tree is a rooted tree that depicts evolutionary relationships between species. A phylogenetic model is a statistical model describing the evolution of species on a phylogenetic tree. There is a discrete random variable associated with every vertex of the tree. The random variables associated with interior vertices are hidden and correspond to extinct species; the random variables associated with leaves are observed and correspond to extant species. The model parameters are the root probability and the rate or mutation matrices at the edges of the phylogenetic tree. There are different constraints on the model parameters depending on the phylogenetic model. The joint probabilities of random variables associated with leaves (leaf probabilities) are polynomials in the model parameters.

In 1987, Cavender and Felsenstein \cite{Cavender_Felsenstein} and, separately, Lake \cite{Lake}, introduced an algebraic approach to study phylogenetic models focusing on the search for phylogenetic invariants. A phylogenetic invariant of the model is a polynomial in the leaf probabilities which vanishes for every choice of model parameters. However,  phylogenetic invariants alone do not cut out all
the biologically meaningful points. One needs to include inequalities in order to obtain a complete description of the set of leaf probabilities corresponding to phylogenetic tree models.

This paper focuses on the study of group-based models which require the mutation matrices to be invariant under the action of an abelian group $G$. A symmetric group-based model assumes that the mutation matrices are symmetric. In particular, a symmetric group-based model can be a submodel of a non-symmetric group-based model with extra symmetricity conditions on mutation matrices. Phylogenetic invariants for group-based models are described in~\cite{SturmfelsSullivant}. A smaller set of phylogenetic invariants that cut out biologically meaningful points is given in~\cite{CasanellasFernandezMichalek}. Our first aim is to understand polynomial inequalities that characterize leaf probabilities for symmetric group-based models. A method for deriving the inequalities in the Fourier coordinates is given in Matsen~\cite[Proposition 3.5]{Matsen}. For the sake of completeness, we present a proof of~\cite[Proposition 3.5]{Matsen} here, and we add certain positivity constraints that appear for the Kimura 3-parameter model in~\cite{CasanellasFernandez}.  We explicitly derive the semialgebraic description of the leaf probabilities of the CFN model on the tripod tree $K_{1,3}$.

Identifying the equality and inequality characterization of the leaf probabilities is only one part of the problem. The maximum likelihood estimation aims to find parameters that maximize the likelihood of observing the data for the given phylogenetic tree and phylogenetic model. In practice, numerical methods are used to find the MLE. However, as MLE is a non-convex optimization problem, there is no guarantee of outputting the global optimum. Since phylogenetic models are not necessarily compact, the MLE might even not exist. We will use methods from computational and numerical algebraic geometry similar to~\cite{Gross_etal} to study an example for which the MLE does not exist for the CFN model on the tripod $K_{1,3}$ and a particular data vector. In this example, the global maximum is achieved when one of the model parameters goes to infinity. The nonexistence of the MLE would have been very difficult to discover without the algebraic methods that we use in this paper. One should see the example for the CFN model on the tripod $K_{1,3}$ as an illustration how to use numerical algebraic geometry for MLE in phylogenetics. It will be the subject of future work to develop a package that automatizes the computation in the phylogenetics setting, so that it can be easily used for studying further examples.

Finally, we consider the embedding problem for symmetric group-based models. Phylogenetic models are originally defined using rate matrices, e.g. matrices whose off-diagonal entries are nonnegative and the entries in each row sum to one. Mutation matrices are matrix exponentials of rate matrices. We furthermore assume that both the rate and mutation matrices satisfy the assumptions of a symmetric group-based model, i.e. they are symmetric and invariant under the action of an abelian group $G$. We call all mutation matrices that are invariant under the action of $G$ and that can be written as matrix exponentials of rate matrices that are invariant under the action of $G$, $G$-embeddable. We will characterize $G$-embeddable mutation matrices for symmetric group-based models. For the special cases of the CFN and the Kimura 3-parameter model, such characterizations are given in~\cite{Kingman,RocaFernandez}. Our characterization is for any number of states $k$ (for example $k=20$ corresponds to amino acids).

In Section~\ref{section:preliminaries}, we introduce the preliminaries of phylogenetic models and present tools from~\cite{Matsen}. Proposition~\ref{procedure:implicit_description} in Section~\ref{section:implicit_description} gives an algorithm for deriving the semialgebraic description of the leaf probabilities of a symmetric group-based model. This is mostly application of~\cite{Matsen}, however, it also considers additional positivity constraints. The main result in Section~\ref{section:embedding} is Theorem~\ref{theorem:embeddable_matrices} that gives a characterization of $G$-embeddable mutation matrices for symmetric group-based models. Finally, Algorithm~\ref{algorithm} in Section~\ref{section:MLE} outlines how to use numerical algebraic geometry to compute the MLE with probability one. This algorithm is applied on the CFN model on the tripod in Example~\ref{example:MLE}.

\section{Preliminaries}
\label{section:preliminaries}

The exposition in this section largely follows~\cite{Matsen}. A phylogenetic tree $T$ is a rooted tree with $n$ labeled leaves and it represents the genetic relationship between different
species. Its leaves correspond to current species and the internal nodes correspond to common ancestors. There is a discrete random variable $X_v$ taking $k
\in \mathbb{N}$ possible values associated to each vertex $v$ of the tree $T$. Typical values for $k$ are two, four or twenty, corresponding to a binary feature, the number of nucleotides and the number of amino acids. For example, if $k=4$, the random variable at a leaf represents the probability of observing $A,C,G$ or $T$ in the DNA of the species corresponding to the leaf.

A phylogenetic model assumes that the random variables at vertices evolve according to a Markov process, i.e. there is a transition (mutation) matrix $P^{(e)}$ associated to every edge $e$ that reflects the change in the probabilities when moving from one vertex to another. The transition matrices have the form
$$
P^{(e)}=\exp(t_e Q^{(e)}),
$$
where $\exp$ stands for matrix exponentiation, $t_e \geq 0$ represents time and $Q^{(e)}$ is a rate matrix. The non-diagonal entries of a rate matrix are non-negative and each row sums to zero. In the rest of the paper, we assume that $t_e$ is incorporated in the rate matrix $Q^{(e)}$.

Without loss of generality, we will assume that the distribution at the root is  uniform. If the distribution at the root is arbitrary, then the following procedure can be applied to reduce to the uniform case: One adds a new edge starting from the root and reroots the tree at the additional leaf. The previous root distribution is obtained by using a suitable transition matrix for the new edge. The only difference  is that the new leaf is hidden while other leaves are observed.

In this paper, we investigate group-based phylogenetic models. This means that we fix an abelian group $G$, set $k=|G|$ and assume that rate matrices are invariant under the action of $G$, i.e. there exists a vector $\psi^{(e)} \in \mathbb{R}^G$ such that $Q^{(e)}_{g,h}=\psi^{(e)}(h-g)$. We also have $P^{(e)}_{g,h}=f^{(e)}(h-g)$ for a probability vector $f^{(e)} \in \mathbb{R}^G$. The phylogenetic models we consider are symmetric, which means $Q^{(e)}_{g,h}=Q^{(e)}_{h,g}$. In the case of group-based models, this is equivalent to $\psi^{(e)}(g)=\psi^{(e)}(-g)$ and implies $f^{(e)}(g)=f^{(e)}(-g)$.

The joint probabilities $p_{i_1,\ldots,i_n}=\text{Pr}(X_1=i_1,\ldots,X_n=i_n)$ at the $n$ leaves can be written as polynomials in the root probabilities and in the entries of the mutation matrices.  Denote by ${\bf p}$ the vector of joint probabilities $p_{i_1,\ldots,i_n}$. As it is common in phylogenetic algebraic geometry, we will use the discrete Fourier transform to study the set of mutation matrices and the set of joint probabilities at the leaves for a given phylogenetic tree and a group-based model. The reason for this is that phylogenetic invariants are considerably simpler in the Fourier coordinates.

Denote by $\hat{G}$ the dual group of $G$ whose elements are the group homomorphisms from $G$ to the multiplicative group of complex numbers of magnitude one. Given a function $a:G \rightarrow \mathbb{C}$, its discrete Fourier transform is the function $\check{a}: \hat{G} \rightarrow \mathbb{C}$ defined by
$$
\check{a}(\hat{g})=\sum_{h \in G} \hat{g}(h) a(h).
$$
It is an invertible linear transformation given by the matrix $K$, where $K_{g,h}=\hat{g}(h)$. The image of the vector ${\bf p}$ of joint probabilities under the Fourier transform of $G^n$ is denoted ${\bf q}$.

The map from the entries of the rate matrices to the joint probabilities at leaves can be seen as a composition of four maps:
\begin{align} \label{four_maps}
\{\psi^{(e)}\}_{e \in E} \rightarrow \{\check{\psi}^{(e)}\}_{e \in E} \rightarrow  \{\check{f}^{(e)}\}_{e \in E} \rightarrow {\bf q} \rightarrow {\bf p}.
\end{align}

\begin{enumerate}[(1)]
\item The map from $\{\psi^{(e)}\}_{e \in E}$ to $\{\check{\psi}^{(e)}\}_{e \in E}$ is given by the discrete Fourier transform of $G$. It is an invertible linear transformation given by the matrix $K$.

\item The map from $\{\check{\psi}^{(e)}\}_{e \in E}$ to $\{\check{f}^{(e)}\}_{e \in E}$ is given by
\begin{align} \label{map2}
\check{f}^{(e)}(g)=\exp(\check{\psi}^{(e)}(g))
\end{align}
by~\cite[Lemma 2.2]{Matsen}. It is an isomorphism between $\mathbb{R}^{E \times G}$ and $\mathbb{R}_{>0}^{E \times G}$.

\item The map from $\{\check{f}^{(e)}\}_{e \in E}$ to ${\bf q}$ is given by
\begin{align}
q_{{\bf g}}=\prod_{e \in E} \check{f}^{e}(^*g_e)
\label{map3}
\end{align}
by~\cite[Theorem 3]{SzekelySteelErdos}, where $^*g_e=\sum_{i \in \Lambda(e)} g_i$ and $\Lambda(e)$ is the set of observed leaves below $e$. On the domain $\mathbb{R}_{>0}^{E \times G}$, this map is injective: \cite[Proposition 3.3 and Proposition 3.4]{Matsen} give a map from ${\bf q}$ to $\{[\check{f}^{(e)}]^2\}_{e \in E}$. Taking nonnegative square roots results in a left inverse to the map~(\ref{map3}).

\item The map from ${\bf q}$ to ${\bf p}$ is given by the inverse of the discrete Fourier transform of $G^n$. It is an invertible linear transformation given by the matrix $H^{-1}$, where $H$ is the $n$-fold Kronecker product of the matrix $K$.
\end{enumerate}

\begin{example}\label{running_example} \rm
We will consider in detail the Cavender-Farris-Neyman (CFN) model~\cite{Cavender,Farris,Neyman} on the rooted claw tree $T=K_{1,3}$. This example has been previously studied in~\cite[Example 3]{SturmfelsSullivant} and~\cite[Example 14]{Hosten_etal}. The CFN model is a group-based model with $G=\mathbb{Z}_2$ and $k=2$. Denote the root distribution by $\pi=(\pi_0,\pi_1)$ and the mutation matrices at edges $e_1,e_2,e_3$ by
\[ P^{(e_1)} = \left( \begin{array}{cc}
 \alpha^{e_1} & \beta^{e_1} \\
 \beta^{e_1} & \alpha^{e_1} \end{array} \right),
 P^{(e_2)} = \left( \begin{array}{cc}
 \alpha^{e_2} & \beta^{e_2} \\
 \beta^{e_2} & \alpha^{e_2}\end{array} \right),
 P^{(e_3)} = \left( \begin{array}{cc}
 \alpha^{e_3} & \beta^{e_3} \\
 \beta^{e_3} & \alpha^{e_3} \end{array} \right).
  \]
Since $\pi_i,\alpha^{e_i},\beta^{e_i}$ are probabilities, they are real numbers in $[0,1]$ and $\alpha^{e_i}+\beta^{e_i}=1$. Moreover, the determinant of $P^{(e_i)}$ is positive, because $P^{(e_i)}$ comes from a rate matrix $Q^{(e_i)}$. Conversely, for every $P^{(e_i)}$ satisfying these constraints, there exists a rate matrix $Q^{(e_i)}$ such that $P^{(e_i)}=\exp(t_{e_i}Q^{(e_i)})$ by~\cite[Proposition 2]{Kingman}. In Section~\ref{section:embedding}, we study constraints on mutation matrices for general symmetric group-based models.

The joint probabilities at the leaves have the parametrization
\begin{eqnarray*}
p_{000}  =  \pi_0 \alpha^{e_1} \alpha^{e_2} \alpha^{e_3} + \pi_1 \beta^{e_1} \beta^{e_2}\beta^{e_3} , & p_{001}  =  \pi_0 \alpha^{e_1} \alpha^{e_2} \beta^{e_3} + \pi_1 \beta^{e_1} \beta^{e_2} \alpha^{e_3},\\
p_{010} = \pi_0 \alpha^{e_1} \beta^{e_2} \alpha^{e_3} + \pi_1 \beta^{e_1} \alpha^{e_2} \beta^{e_3}  , & p_{011} = \pi_0 \alpha^{e_1} \beta^{e_2} \beta^{e_3} + \pi_1 \beta^{e_1} \alpha^{e_2} \alpha^{e_3},\\
p_{100} = \pi_0 \beta^{e_1} \alpha^{e_2} \alpha^{e_3} + \pi_1 \alpha^{e_1} \beta^{e_2} \beta^{e_3}  , & p_{101} = \pi_0 \beta^{e_1} \alpha^{e_2} \beta^{e_3} + \pi_1 \alpha^{e_1} \beta^{e_2} \alpha^{e_3},\\
p_{110} = \pi_0 \beta^{e_1} \beta^{e_2} \alpha^{e_3} + \pi_1 \alpha^{e_1} \alpha^{e_2} \beta^{e_3}  , & p_{111} = \pi_0 \beta^{e_1} \beta^{e_2} \beta^{e_3} + \pi_1 \alpha^{e_1} \alpha^{e_2} \alpha^{e_3}.
\end{eqnarray*}
In Section~\ref{section:implicit_description}, we characterize this model in joint probabilities $p_{ijk}$ and without parameters  $\pi_i,\alpha^{e_i},\beta^{e_i}$. This is called the implicit description of a model. It consists of polynomial equations and inequalities in $p_{ijk}$ that cut out the joint probabilities that come from a parametrization by rate matrices. In the Fourier coordinates, these equations can always be chosen to be binomials for any group-based model and tree~\cite{EvansSpeed,SzekelySteelErdos}. These binomials are characterized in~\cite[Theorem 1]{SturmfelsSullivant}. In the case of the CFN model on $K_{1,3}$, these binomials are
$$
\{ q_{001} q_{110} - q_{000} q_{111}, q_{010} q_{101} - q_{000} q_{111}, q_{100} q_{011} - q_{000} q_{111} \}.
$$
The equations defining the model in the original coordinates can be obtained by applying the Fourier transformation of $(\mathbb{Z}_2)^3$ on these binomials:
\begin{align*}
q_{000}&= p_{000}+ p_{001}+ p_{010}+ p_{011}+ p_{100}+ p_{101}+ p_{110}+ p_{111}, \\
q_{001}&= p_{000}- p_{001}+ p_{010}- p_{011}+ p_{100}- p_{101}+ p_{110}- p_{111}, \\
q_{010}&=p_{000}+ p_{001}- p_{010}- p_{011}+ p_{100}+ p_{101}- p_{110}- p_{111}, \\
q_{011}&= p_{000}- p_{001}- p_{010}+ p_{011}+ p_{100}- p_{101}- p_{110}+ p_{111}, \\
q_{100}&= p_{000}+ p_{001}+ p_{010}+ p_{011}- p_{100}- p_{101}- p_{110}- p_{111}, \\
q_{101}&= p_{000}- p_{001}+ p_{010}- p_{011}- p_{100}+ p_{101}- p_{110}+ p_{111}, \\
q_{110}&= p_{000}+ p_{001}- p_{010}- p_{011}- p_{100}- p_{101}+ p_{110}+ p_{111}, \\
q_{111}&= p_{000}- p_{001}- p_{010}+ p_{011}- p_{100}+ p_{101}+ p_{110}- p_{111}.
\end{align*}
\end{example}

Finally, we introduce basic notions from commutative algebra and algebraic geometry. A good introduction is given in~\cite{CoxLittleOShea}.  Let $R=\mathbb{R}[x_1,\ldots,x_n]$ be a polynomial ring. A subset $I \subseteq R$ is an ideal, if it is closed under addition and multiplication by scalars. The radical of an ideal $I$, denoted by $\sqrt{I}$, consists of all the polynomials $f \in R$ such that some power $f^m$ of $f$ is in $I$. Let $S$ be a set of polynomials in $R$ and let $k$ be a field. In this article, $k$ is always $\mathbb{R}$ or $\mathbb{C}$. The affine variety defined by $S$ is
$$
V(S)=\{(a_1,\ldots,a_n) \in k^n: f(a_1,\ldots,a_n)=0 \text{ for all } f \in S \}.
$$
Let $\langle f_1,\ldots,f_s \rangle$ be the ideal generated by $f_1,\ldots,f_s$, i.e.~the smallest ideal containing $f_1,\ldots,f_s$. Then
$$
V(f_1,\ldots,f_s)=V(\langle f_1,\ldots,f_s \rangle).
$$
A point of the variety $V(f_1,\ldots,f_s)$ is a smooth point, if the Jacobian of $f_1,\ldots,f_s$ has maximal possible rank. Otherwise a point of the variety is called singular. Let $T$ be a subset of $k^n$. The Zariski closure $\overline{T}$ of $T$ is the smallest affine variety containing $T$.

\section{Implicit description}\label{section:implicit_description}

Phylogenetic invariants are polynomials that vanish at joint probabilities at leaves for a given model and tree. They were introduced in~\cite{Cavender_Felsenstein,Lake} and have been characterized for group-based phylogenetic models in~\cite[Theorem 1]{SturmfelsSullivant}. A subset of them containing the biologically meaningful points  is given in~\cite{CasanellasFernandezMichalek}. This subset forms a local complete intersection and each polynomial in this subset has degree at most $|G|$. It reduces drastically the number of phylogenetic invariants used: For the Kimura 3-parameter model on a quartet tree it drops from $8002$ generators of the ideal to the $48$ polynomials described in~\cite[Example 4.9]{CasanellasFernandez}.

Besides phylogenetic invariants, polynomial inequalities are needed to give an exact characterization of joint probabilities at leaves for a given model and a tree. For the Kimura 3-parameter model, a set of inequalities is given in~\cite{CasanellasFernandez}. For general symmetric group-based models, polynomial inequalities that cut out joint probabilities at leaves are investigated in~\cite{Matsen}. This section is essentially an exposition of results in~\cite{Matsen}. There are three reasons why we present it here: For the sake of completeness, we will give a proof of~\cite[Proposition 3.5]{Matsen}, we will include the condition that Fourier coordinates need to be positive, and we will apply part of this exposition to the study of the embedding problem in Section~\ref{section:embedding}.

We recall~\cite[Propositions 3.3 and 3.4]{Matsen} that give the left inverse to the map~\ref{map3} on the domain $\mathbb{R}^{E \times G}_{>0}$.

\begin{proposition}[\cite{Matsen}, Proposition 3.3] \label{proposition:left_inverse1}
Given some leaf edge $e$, let $i$ denote the leaf vertex incident to $e$ and let $v$ be the internal vertex incident to $e$. Let $j,k$ be leaf vertices different from $i$ such that the path from $j$ to $k$ contains $v$. Let $w(g_i,g_j,g_k) \in G^n$ assign state $g_x$ to leaf $x$ for $x\in \{i,j,k\}$ and zero to all other leaf vertices. Then
$$
[\check{f}^{(e)}(h)]^2=\frac{q_{w(h,-h,0)} q_{w(-h,0,h)}}{q_{w(0,-h,h)}}.
$$
\end{proposition}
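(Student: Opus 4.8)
The plan is to expand all three Fourier coordinates using the factorization~(\ref{map3}) and thereby reduce the claimed identity to a product of local, per-edge factors, almost all of which turn out to be trivial. Write $w=w(g_i,g_j,g_k)$. For an edge $e'$, the value ${}^*w_{e'}$ depends only on which of the leaves $i,j,k$ lie below $e'$, since $w$ vanishes on every other leaf. So I would introduce indicators $a_{e'},b_{e'},c_{e'}\in\{0,1\}$ recording whether $i$, $j$, $k$ respectively lie below $e'$; then ${}^*w_{e'}=a_{e'}g_i+b_{e'}g_j+c_{e'}g_k$, and~(\ref{map3}) gives $q_{w(g_i,g_j,g_k)}=\prod_{e'\in E}\check{f}^{e'}(a_{e'}g_i+b_{e'}g_j+c_{e'}g_k)$. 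Substituting the three sign patterns $(g_i,g_j,g_k)=(h,-h,0),(-h,0,h),(0,-h,h)$ and forming the ratio, the quotient becomes a single product over edges whose factor at $e'$ equals $\check{f}^{e'}((a_{e'}-b_{e'})h)\,\check{f}^{e'}((c_{e'}-a_{e'})h)/\check{f}^{e'}((c_{e'}-b_{e'})h)$.

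Next I would evaluate this per-edge factor. Since each of $a_{e'},b_{e'},c_{e'}$ is $0$ or $1$, every argument above is one of $0,h,-h$. Two facts collapse the factor: first, $\check{f}^{e'}(0)=\sum_{g}f^{e'}(g)=1$ because $f^{e'}$ is a probability vector; second, $\check{f}^{e'}(-h)=\check{f}^{e'}(h)$, which is exactly where the symmetry hypothesis $f^{e'}(g)=f^{e'}(-g)$ enters, making $\check{f}^{e'}$ even. Running through the eight possible patterns $(a_{e'},b_{e'},c_{e'})$, I expect the factor to equal $[\check{f}^{e'}(h)]^2$ precisely when $b_{e'}=c_{e'}\neq a_{e'}$, and to equal $1$ otherwise: for instance $(1,0,0)$ yields $\check{f}^{e'}(h)\check{f}^{e'}(-h)/\check{f}^{e'}(0)=[\check{f}^{e'}(h)]^2$, whereas any pattern with $b_{e'}\neq c_{e'}$ cancels to $1$. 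Thus the whole product reduces to $\prod_{e'}[\check{f}^{e'}(h)]^2$ taken only over the edges with $b_{e'}=c_{e'}\neq a_{e'}$.

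The crux is then the purely combinatorial identification of these surviving edges, and this is the step I expect to require the most care. Removing $e'$ splits $T$ into two pieces, and since the root sits on one fixed side, $a_{e'}\neq b_{e'}$ holds exactly when $i$ and $j$ land on opposite sides, that is, when $e'$ lies on the path from $i$ to $j$; likewise $a_{e'}\neq c_{e'}$ means $e'$ lies on the path from $i$ to $k$. Hence $b_{e'}=c_{e'}\neq a_{e'}$ characterizes the edges lying on both the $i$--$j$ and the $i$--$k$ paths, a description that is manifestly independent of the rooting. Because $i$ is a leaf adjacent to $v$ and the $j$--$k$ path passes through $v$, the vertex $v$ is the median of $\{i,j,k\}$, so the two paths out of $i$ share only the single edge $e$ before diverging at $v$; that is, $\mathrm{path}(i,j)\cap\mathrm{path}(i,k)=\{e\}$. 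Consequently exactly one edge survives and the product collapses to $[\check{f}^{(e)}(h)]^2$, as claimed. The main obstacle is not any computation but making the median argument airtight while keeping the whole argument independent of where $T$ is rooted; the two algebraic ingredients, $\check{f}^{e'}(0)=1$ and the evenness of $\check{f}^{e'}$, are precisely what force everything off the edge $e$ to cancel.
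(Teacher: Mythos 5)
Your proof is correct and complete: the reduction of the ratio to the per-edge factors $\check{f}^{e'}((a-b)h)\,\check{f}^{e'}((c-a)h)/\check{f}^{e'}((c-b)h)$, the case check showing the factor is $[\check{f}^{e'}(h)]^2$ exactly when $b_{e'}=c_{e'}\neq a_{e'}$ (using $\check{f}^{e'}(0)=1$ and evenness) and $1$ otherwise, and the identification of the surviving edges with $\mathrm{path}(i,j)\cap\mathrm{path}(i,k)=\{e\}$ (since $v$ is the median of $i,j,k$) are all sound. Note that the paper itself offers no proof to compare against — it quotes this as Proposition 3.3 of Matsen — but your argument is the natural one via the factorization~(\ref{map3}) and matches what that reference does.
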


\begin{proposition}[\cite{Matsen}, Proposition 3.4] \label{proposition:left_inverse2}
Given some internal edge $e$, let the two vertices incident to $e$ be $v$ and $v'$. Let $i,j$ (respectively $i',j'$) be leaf vertices such that the path from $i$ to $j$ (respectively the path from $i'$ to $j'$) contains $v$ but not $v'$ (respectively $v'$ but not $v$). Let $z(g_i,g_j,g_{i'},g_{j'}) \in G^n$ assign state $g_x$ to leaf $x$ for $x \in \{i,j,i',j'\}$ and zero to all other leaf vertices. Then
$$
[\check{f}^{(e)}(h)]^2=\frac{q_{z(h,0,-h,0)} q_{z(0,-h,0,h)}}{q_{z(h,-h,0,0)} q_{z(0,0,-h,h)}}.
$$
\end{proposition}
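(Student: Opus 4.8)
The plan is to evaluate the right-hand side directly through the product formula~(\ref{map3}), which expresses each Fourier coordinate $q_{\bf g}=\prod_{e'\in E}\check{f}^{(e')}({}^*g_{e'})$ as a product over the edges of $T$, where ${}^*g_{e'}=\sum_{x\in\Lambda(e')}g_x$. Since the four vectors $z(h,0,-h,0)$, $z(0,-h,0,h)$, $z(h,-h,0,0)$, $z(0,0,-h,h)$ assign a nonzero state only to the leaves $i,j,i',j'$, the contribution of an edge $e'$ to any of them depends only on the set $L_{e'}=\Lambda(e')\cap\{i,j,i',j'\}$. Writing the ratio in the statement as a single product over edges, I would obtain
$$
\frac{q_{z(h,0,-h,0)}\,q_{z(0,-h,0,h)}}{q_{z(h,-h,0,0)}\,q_{z(0,0,-h,h)}}=\prod_{e'\in E}\frac{\check{f}^{(e')}(\sigma_1^{e'})\,\check{f}^{(e')}(\sigma_2^{e'})}{\check{f}^{(e')}(\sigma_3^{e'})\,\check{f}^{(e')}(\sigma_4^{e'})},
$$
where $\sigma_m^{e'}\in\{-h,0,h\}$ is the sum over $L_{e'}$ of the states assigned by the $m$-th vector. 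The goal is to show that every factor equals $1$ except the one for $e'=e$, which equals $[\check{f}^{(e)}(h)]^2$.

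First I would record the identity $z(h,0,-h,0)+z(0,-h,0,h)=z(h,-h,0,0)+z(0,0,-h,h)$ (both equal $(h,-h,-h,h)$ on the four distinguished leaves), which gives $\sigma_1^{e'}+\sigma_2^{e'}=\sigma_3^{e'}+\sigma_4^{e'}$ for every edge. Since each $\sigma_m^{e'}$ lies in $\{-h,0,h\}$, a common value of this sum determines the unordered pair uniquely unless the sum is $0$, in which case the pair is either $\{0,0\}$ or $\{h,-h\}$; hence the $e'$-factor is $1$ whenever $\{\sigma_1^{e'},\sigma_2^{e'}\}=\{\sigma_3^{e'},\sigma_4^{e'}\}$, and a short check shows the only exceptions are the four configurations $L_{e'}\in\{\{i,j\},\{i',j'\},\{i,i'\},\{j,j'\}\}$. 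Using that $\check{f}^{(e')}(0)=1$ (as $f^{(e')}$ is a probability vector) and $\check{f}^{(e')}(-h)=\check{f}^{(e')}(h)$ (the symmetry $f^{(e')}(g)=f^{(e')}(-g)$), the edge $e$, whose clade $\Lambda(e)$ contains exactly one of the pairs $\{i,j\}$, $\{i',j'\}$, contributes precisely $\check{f}^{(e)}(h)\check{f}^{(e)}(-h)/\check{f}^{(e)}(0)^2=[\check{f}^{(e)}(h)]^2$.

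The substantive step, and the main obstacle, is to prove that no edge $e'\neq e$ falls into one of these four configurations. Here I would use the separation principle that $e'$ lies on the path between two leaves $x,y$ if and only if exactly one of $x,y$ belongs to $\Lambda(e')$, together with the hypotheses that the path from $i$ to $j$ (respectively from $i'$ to $j'$) passes through $v$ (respectively $v'$) but not $v'$ (respectively $v$); the latter forces these two paths to lie in the two distinct components of $T\setminus e$ and hence to be edge-disjoint. For the cross pairs $L_{e'}=\{i,i'\}$ or $\{j,j'\}$, the edge $e'$ would have to separate $i$ from $j$ and $i'$ from $j'$ simultaneously, i.e. lie on both of these edge-disjoint paths, which is impossible. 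For a same-side pair such as $L_{e'}=\{i,j\}$, the edge $e'$ separates $i$ from both $i'$ and $j'$; comparing the paths from $i$ to $i'$ and from $i$ to $j'$, whose only common edges are $e$ and the edges on the path from $i$ to $v$ (because $v'$ separates $i'$ and $j'$), and using $e'\neq e$, forces $e'$ onto the segment of the path from $i$ to $j$ lying between $i$ and $v$; but then $e'$ separates $i$ and $j$, contradicting $i,j\in\Lambda(e')$. The case $L_{e'}=\{i',j'\}$ is symmetric.

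With all four bad configurations excluded for $e'\neq e$, the product telescopes to the single factor coming from $e$, giving $[\check{f}^{(e)}(h)]^2$. The analysis is insensitive to which endpoint of $e$ is closer to the root, since the two choices only interchange the roles of $\{i,j\}$ and $\{i',j'\}$ and leave the value $[\check{f}^{(e)}(h)]^2$ unchanged.
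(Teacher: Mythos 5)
The paper itself offers no proof of this statement: it is quoted directly from Matsen's Proposition~3.4, so there is no internal proof to compare against. Your argument is correct and complete, and it is the natural (and, as far as I can tell, the original) route: expand each of the four Fourier coordinates via the product formula~(\ref{map3}), observe that the contribution of an edge $e'$ depends only on $L_{e'}=\Lambda(e')\cap\{i,j,i',j'\}$, enumerate the sixteen possible intersections to see that the $e'$-factor is trivial except when $L_{e'}$ is one of $\{i,j\}$, $\{i',j'\}$, $\{i,i'\}$, $\{j,j'\}$, and then use the tree hypotheses to exclude these four configurations for every $e'\neq e$ while pinning down the factor $\check{f}^{(e)}(h)\check{f}^{(e)}(-h)/\check{f}^{(e)}(0)^2=[\check{f}^{(e)}(h)]^2$ for $e$ itself. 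The combinatorial step is handled carefully: the cross pairs are impossible because the paths from $i$ to $j$ and from $i'$ to $j'$ lie in the two distinct components of $T\setminus e$, and the same-side pairs because an edge distinct from $e$ separating $i$ from both $i'$ and $j'$ must lie on the segment from $i$ to $v$ and hence separates $i$ from $j$. One cosmetic remark: the claim that a common value of $\sigma_1+\sigma_2=\sigma_3+\sigma_4$ determines the unordered pair unless the sum is zero is not literally true for elements $h$ with $3h=0$ (both $\{0,h\}$ and $\{-h,-h\}$ sum to $h$), but this heuristic is not what carries the proof; the explicit sixteen-case check does, and that check is independent of the order of $h$.
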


\begin{proposition} \label{procedure:implicit_description}
Consider the set of $\{\psi^{(e)}\}_{e \in E}$ that satisfies $\sum_{g \in G} \psi^{(e)}(g)=0$ and $\psi^{(e)}(g) \geq 0$ for all non-zero $g \in G$. The images of this set under the maps in~(\ref{four_maps}) are:
\begin{enumerate}[(1)]
\item The constraints for $\{\check{\psi}^{(e)}\}_{e \in E}$ are obtained by substituting $\psi^{(e)}$ by $K^{-1} \check{\psi}^{(e)}$ in the constraints for $\{\psi^{(e)}\}$. In particular, this gives $\check{\psi}^{(e)}(0)=0$ and  $(K^{-1} \check{\psi}^{(e)})(g) \geq 0$ for all non-zero $g \in G$.

\item The constraints for $\{\check{f}^{(e)}\}_{e \in E}$ are  the multiplicative versions of the constraints for $\{\check{\psi}^{(e)}\}_{e \in E}$ and positivity constraints. In particular, we have $\check{f}^{(e)}(0)=1$, $(\check{f}^{(e)})^{K^{-1}_{g,:}} \geq 1$ for all non-zero $g \in G$ and $\check{f}^{(e)}(g) > 0$ for all $g \in G$. These inequalities are equivalent to $\check{f}^{(e)}(0)=1$, $(\check{f}^{(e)})^{2K^{-1}_{g,:}} \geq 1$ for all non-zero $g \in G$ and $\check{f}^{(e)}(g) > 0$ for all $g \in G$. Here we have squared the multiplicative inequalities.

\item The constraints for ${\bf q}$ are given by phylogenetic invariants, equality $q_{00\ldots 0}=1$, inequalities ${\bf q} > 0$ and inequalities that are obtained by substituting expressions for $[\check{f}^{(e)}]^2$ in Propositions~\ref{proposition:left_inverse1} and~\ref{proposition:left_inverse2} to multiplicative inequalities in the previous item.

\item The constraints for ${\bf p}$ are obtained by substituting  ${\bf q}$ by $H {\bf p}$ in the constraints for ${\bf q}$.
\end{enumerate}
\end{proposition}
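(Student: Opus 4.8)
The plan is to push the defining constraints of the starting set forward through the four maps in~(\ref{four_maps}) one at a time, using at each stage that the map in question is either a bijection, so that the image is cut out by the substituted constraints, or an injection with an explicitly known left inverse, so that the image can be reconstructed. Since maps~(1) and~(4) are invertible linear transformations given by $K$ and $H^{-1}$, the image of any constraint set under them is obtained simply by the change of variables $\psi^{(e)}=K^{-1}\check\psi^{(e)}$ and ${\bf q}=H{\bf p}$ respectively; this immediately yields items~(1) and~(4). The only point worth recording in item~(1) is that evaluating the Fourier transform at the trivial character gives $\check\psi^{(e)}(0)=\sum_{g\in G}\psi^{(e)}(g)$, so the linear equality $\sum_{g}\psi^{(e)}(g)=0$ becomes $\check\psi^{(e)}(0)=0$, while the inequalities $\psi^{(e)}(g)\geq 0$ become $(K^{-1}\check\psi^{(e)})(g)\geq 0$ by direct substitution.

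For item~(2) I would use that the map~(\ref{map2}) acts coordinatewise as the real exponential, an order-preserving bijection $\R\to\R_{>0}$ with inverse $\log$. The equality $\check\psi^{(e)}(0)=0$ becomes $\check f^{(e)}(0)=\exp(0)=1$, and every linear inequality $(K^{-1}\check\psi^{(e)})(g)=\sum_h K^{-1}_{g,h}\check\psi^{(e)}(h)\geq 0$ becomes, after substituting $\check\psi^{(e)}(h)=\log\check f^{(e)}(h)$ and exponentiating, the monomial inequality $(\check f^{(e)})^{K^{-1}_{g,:}}\geq 1$. Positivity $\check f^{(e)}(g)>0$ is automatic because the exponential takes only positive values, which is precisely the statement that map~(2) is an isomorphism onto $\R_{>0}^{E\times G}$. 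Finally, since each $\check f^{(e)}(g)$ is positive, the map $x\mapsto x^2$ is order-preserving on this range, so $(\check f^{(e)})^{K^{-1}_{g,:}}\geq 1$ is equivalent to its square $(\check f^{(e)})^{2K^{-1}_{g,:}}\geq 1$; this squaring is what makes the constraints expressible in terms of the quantities $[\check f^{(e)}]^2$ that become available in the next step.

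For item~(3) I would treat the two inclusions separately. That the image of $\R_{>0}^{E\times G}$ under~(\ref{map3}) satisfies all the listed constraints is immediate: the phylogenetic invariants vanish on the image by~\cite[Theorem 1]{SturmfelsSullivant}, $q_{0\cdots0}=\prod_e\check f^{(e)}(0)=1$, the coordinates of ${\bf q}$ are products of positive numbers hence positive, and substituting the left-inverse expressions of Propositions~\ref{proposition:left_inverse1} and~\ref{proposition:left_inverse2} for $[\check f^{(e)}]^2$ into the squared inequalities of item~(2) yields valid inequalities on ${\bf q}$. For the reverse inclusion I would take any ${\bf q}$ satisfying the listed constraints, use ${\bf q}>0$ to guarantee that the ratios in Propositions~\ref{proposition:left_inverse1} and~\ref{proposition:left_inverse2} are positive, define each $\check f^{(e)}(h)$ as the nonnegative square root of the corresponding ratio, and then verify that this $\{\check f^{(e)}\}$ is positive, satisfies $\check f^{(e)}(0)=1$ together with the squared inequalities, and is carried back to the original ${\bf q}$ by~(\ref{map3}).

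The main obstacle is precisely this last verification: one must show that the left inverse supplied by Propositions~\ref{proposition:left_inverse1} and~\ref{proposition:left_inverse2} actually reconstructs the given ${\bf q}$, and this is where the phylogenetic invariants do the essential work. They guarantee both that the several different ratio formulas for a single $[\check f^{(e)}(h)]^2$ agree and that the reconstructed parameters, when fed back through the monomial map~(\ref{map3}), reproduce every coordinate $q_{\bf g}$ and not merely the special coordinates appearing in the left-inverse formulas. In other words, the content of item~(3) is that the image of the injective monomial map~(\ref{map3}) on $\R_{>0}^{E\times G}$ is cut out set-theoretically by the binomial phylogenetic invariants together with ${\bf q}>0$ and $q_{0\cdots0}=1$; granting this, the inequality part of item~(3) follows formally by substituting the left inverse into the constraints of item~(2).
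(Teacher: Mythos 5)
Your treatment of items (1), (2) and (4) matches the paper's, and your overall plan for item (3) --- forward inclusion by direct verification, reverse inclusion by feeding ${\bf q}$ through the left inverse of Propositions~\ref{proposition:left_inverse1} and~\ref{proposition:left_inverse2} and taking nonnegative square roots --- is also the paper's. The gap sits exactly at the point you flag as ``the main obstacle'' and then dispose of with ``granting this'': you never prove that the left inverse actually reconstructs the given ${\bf q}$, i.e.\ that the monomial map~(\ref{map3}) carries the positive orthant \emph{onto} the positive part of the variety of the phylogenetic invariants. Saying that ``the phylogenetic invariants do the essential work'' names the right ingredient but is not an argument; as written, the reverse inclusion of item (3) is assumed rather than proved, and with it the whole substitution step for the inequalities.

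The paper closes this gap as follows. Composing the left inverse with the map~(\ref{map3}) expresses each coordinate as $q_{\bf g}=\sqrt{{\bf q}^{a_{\bf g}}/{\bf q}^{b_{\bf g}}}$ for suitable exponent vectors ${\bf a}_{\bf g},{\bf b}_{\bf g}$; this identity holds on the image, so its polynomial form $q_{\bf g}^2\,{\bf q}^{b_{\bf g}}={\bf q}^{a_{\bf g}}$ vanishes on the image and therefore on its entire Zariski closure, in particular at any ${\bf q}$ satisfying the phylogenetic invariants. On the locus ${\bf q}>0$ one may divide and take nonnegative square roots, recovering $q_{\bf g}=\sqrt{{\bf q}^{a_{\bf g}}/{\bf q}^{b_{\bf g}}}$; hence the left inverse followed by~(\ref{map3}) is the identity on the positive part of the closure, which yields the desired bijection and lets the inequalities of item (2) be transported by substituting the expressions for $[\check{f}^{(e)}]^2$. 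This ``a rational identity valid on the image extends to the positive part of the Zariski closure'' step is the one idea your proposal is missing, and it is precisely how the invariants enter. (The paper also records an observation you implicitly bypass: the constraint $\check{f}^{(e)}>0$ contributes no inequalities on ${\bf q}$ beyond ${\bf q}>0$, because any preimage satisfying the squared inequalities can be replaced by its entrywise absolute value without changing ${\bf q}$; your construction sidesteps this by producing a positive preimage directly, which is fine once the reconstruction step above is in place.)
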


Here $(\check{f}^{(e)})^{K^{-1}_{g,:}}$ denotes the Laurent monomial $\prod_{h \in G} (\check{f}^{(e)}_h)^{K^{-1}_{g,h}}$.

\begin{proof}

The constraints in items (1) and (4) are obtained, because the corresponding maps are invertible linear transformations. We will prove that the constraints in items (2) and (3) are correct.

\begin{lemma} \label{lemma:inequalities1}
The image of $\{\check{\psi}^{(e)}:\check{\psi}^{(e)}(0)=0 \text{ and } (K^{-1} \check{\psi}^{(e)})(g) \geq 0 \text{ for} \text{ all} \linebreak \text{non-zero } g \in G\}$ under the map~(\ref{map2}) is cut out by the constraints in item (2).
\end{lemma}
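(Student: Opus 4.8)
The plan is to show that the map~(\ref{map2}), which acts coordinatewise via $\check{f}^{(e)}(g)=\exp(\check{\psi}^{(e)}(g))$, carries the linear equality and inequality constraints on $\check{\psi}^{(e)}$ precisely into the multiplicative constraints claimed in item (2). Since the exponential map is a bijection between $\R$ and $\R_{>0}$, and by item~(2) of~(\ref{four_maps}) the full map~(\ref{map2}) is an isomorphism between $\R^{E\times G}$ and $\R_{>0}^{E\times G}$, I would first record that the positivity constraint $\check{f}^{(e)}(g)>0$ for all $g\in G$ is automatic and exactly captures the image of the whole space under the exponential. It therefore suffices to track how the two remaining constraints transform.

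First I would handle the equality. The constraint $\check{\psi}^{(e)}(0)=0$ is equivalent, under $\check{f}^{(e)}(0)=\exp(\check{\psi}^{(e)}(0))$, to $\check{f}^{(e)}(0)=1$, which is the stated equality in item~(2). Next I would handle the inequalities $(K^{-1}\check{\psi}^{(e)})(g)\geq 0$ for all non-zero $g$. Writing out the matrix product, $(K^{-1}\check{\psi}^{(e)})(g)=\sum_{h\in G} K^{-1}_{g,h}\,\check{\psi}^{(e)}(h)$, I would apply the exponential to both sides of the inequality $\sum_{h} K^{-1}_{g,h}\,\check{\psi}^{(e)}(h)\geq 0$. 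Because $\exp$ is monotone increasing and $\exp(0)=1$, this inequality is equivalent to
\[
\exp\!\Bigl(\sum_{h\in G} K^{-1}_{g,h}\,\check{\psi}^{(e)}(h)\Bigr)
=\prod_{h\in G}\exp\bigl(\check{\psi}^{(e)}(h)\bigr)^{K^{-1}_{g,h}}
=\prod_{h\in G}\bigl(\check{f}^{(e)}(h)\bigr)^{K^{-1}_{g,h}}\geq 1,
\]
which is exactly the Laurent-monomial inequality $(\check{f}^{(e)})^{K^{-1}_{g,:}}\geq 1$ from item~(2), using the notation defined immediately after the proposition. This establishes that the image is contained in the stated constraint set and, since every step is an equivalence, that the constraint set is contained in the image, giving equality of the two sets.

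Finally I would justify the last sentence of item~(2), namely that one may square the multiplicative inequalities and replace $K^{-1}_{g,:}$ by $2K^{-1}_{g,:}$. Since both sides of $(\check{f}^{(e)})^{K^{-1}_{g,:}}\geq 1$ are positive (by the positivity constraint already in force) and the threshold is $1$, squaring is an equivalence: $x\geq 1 \iff x^2\geq 1$ for $x>0$. Squaring the Laurent monomial doubles each exponent, yielding $(\check{f}^{(e)})^{2K^{-1}_{g,:}}\geq 1$, which proves the equivalence of the two formulations. I expect the main subtlety, rather than any deep obstacle, to be purely bookkeeping: keeping the coordinatewise action of $\exp$ aligned with the matrix $K^{-1}$ so that exponentiating a linear combination becomes a Laurent monomial, and confirming that the positivity of the coordinates is what licenses both the passage $\exp(\text{sum})=\prod\exp$ and the squaring step. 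No compactness or limiting argument is needed, since every transformation in the chain is a genuine bijection on the relevant domain.
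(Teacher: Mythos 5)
Your proposal is correct and follows essentially the same route as the paper's proof: positivity of $\check{f}^{(e)}$ comes from the exponential being a bijection onto $\R_{>0}$, the equality $\check{\psi}^{(e)}(0)=0$ becomes $\check{f}^{(e)}(0)=1$, and the linear inequalities $a^{T}x\geq 0$ become the Laurent-monomial inequalities $(e^{x})^{a^{T}}\geq 1$ by monotonicity of $\exp$. Your extra remark justifying the squaring step is a harmless addition that the paper leaves implicit.
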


\begin{proof}
The positivity constraints come from the exponentiation map. Moreover, $a^T x \geq 0$ is equivalent to $e^{a^T x} \geq 1$, and since $e^{a^T x}=(e^x)^{a^T}$, it is also equivalent to $(e^x)^{a^T} \geq 1$. Hence the equalities and inequalities for $\{\check{f}^{(e)}\}_{e \in E}$ are the multiplicative versions of the equalities and inequalities for $\{\check{\psi}^{(e)}\}_{e \in E}$ together with  $\check{f}^{(e)}(g) > 0$ for all $g \in G$.
\end{proof}

\begin{lemma} \label{lemma:inequalities2}
The image of $\{\check{f}^{(e)}:\check{f}^{(e)}(0)=1,(\check{f}^{(e)})^{K^{-1}_{g,:}} \geq 1 \text{ for all non-zero } g \in G \text{ and } \check{f}^{(e)}(g) > 0 \text{ for all } g \in G\}$ under the map~(\ref{map3}) is cut out by the constraints in item (3).
\end{lemma}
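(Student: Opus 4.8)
The plan is to establish the asserted set equality by double inclusion, using the product parametrization~(\ref{map3}) together with the left inverse supplied by Propositions~\ref{proposition:left_inverse1} and~\ref{proposition:left_inverse2}. Write $M$ for the map~(\ref{map3}), let $D$ be the domain appearing in the statement (the set of $\check{f}^{(e)}$ with $\check{f}^{(e)}(0)=1$, $(\check{f}^{(e)})^{K^{-1}_{g,:}}\geq 1$ for nonzero $g$, and $\check{f}^{(e)}>0$), and let $C$ denote the solution set of the constraints in item~(3). The goal is to show $M(D)=C$. The two propositions give a map $L$ from positive ${\bf q}$ to $\{[\check{f}^{(e)}]^2\}$, and taking positive square roots turns $L$ into a left inverse of $M$ on $D$; the content of the lemma is that $L$ is in fact a two-sided inverse between $D$ and $C$.

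First I would treat the forward inclusion $M(D)\subseteq C$. Fix $\{\check{f}^{(e)}\}\in D$ and set ${\bf q}=M(\{\check{f}^{(e)}\})$. The phylogenetic invariants vanish on ${\bf q}$ because the binomials of~\cite[Theorem 1]{SturmfelsSullivant} vanish identically under the monomial parametrization~(\ref{map3}). The normalization $q_{0\ldots 0}=\prod_{e}\check{f}^{(e)}(0)=1$ follows from $\check{f}^{(e)}(0)=1$, and ${\bf q}>0$ follows since every factor $\check{f}^{(e)}({}^{*}g_e)$ is positive. Finally, on $M(D)$ the formulas of Propositions~\ref{proposition:left_inverse1} and~\ref{proposition:left_inverse2} express each $[\check{f}^{(e)}(h)]^2$ as the prescribed ratio of the $q_{\bf g}$; substituting these identities into the multiplicative inequalities $(\check{f}^{(e)})^{2K^{-1}_{g,:}}\geq 1$ of item~(2) produces exactly the substituted inequalities listed in item~(3), which therefore hold.

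Next I would treat the reverse inclusion $C\subseteq M(D)$. Given ${\bf q}\in C$, positivity ${\bf q}>0$ lets me define values $[\check{f}^{(e)}(h)]^2$ by the right-hand sides of Propositions~\ref{proposition:left_inverse1} and~\ref{proposition:left_inverse2} (fixing one admissible choice of auxiliary leaves per edge) and take positive square roots to obtain $\check{f}^{(e)}(h)>0$; call this assignment $L({\bf q})$. Setting $h=0$ in Proposition~\ref{proposition:left_inverse1} and using $q_{0\ldots 0}=1$ gives $\check{f}^{(e)}(0)=1$, while the defining relation makes the multiplicative inequalities $(\check{f}^{(e)})^{2K^{-1}_{g,:}}\geq 1$ coincide with the substituted inequalities of item~(3), which hold by hypothesis. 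Hence $L({\bf q})\in D$. Once I also know $M(L({\bf q}))={\bf q}$, I obtain $C=M(L(C))\subseteq M(D)$, and combined with the forward inclusion this yields $M(D)=C$.

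The main obstacle is precisely the identity $M\circ L=\mathrm{id}_C$, together with the well-definedness of the reconstruction. Because the formulas in Propositions~\ref{proposition:left_inverse1} and~\ref{proposition:left_inverse2} depend on a choice of witness leaves, I must check that feeding $L({\bf q})$ back through~(\ref{map3}) reproduces the prescribed coordinate $q_{\bf g}$ for \emph{every} ${\bf g}$, not only for the witness coordinates used in the construction. Both this and the independence of the reconstruction from the witness choice are consequences of the phylogenetic invariants defining $C$: the binomial relations are exactly what force the ratios of the two propositions to be consistent and what guarantee that the reconstructed monomial $\prod_{e}\check{f}^{(e)}({}^{*}g_e)$ equals $q_{\bf g}$. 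I would make this precise by using the invariants and the normalization $q_{0\ldots 0}=1$ to rewrite an arbitrary $q_{\bf g}$ in terms of the witness coordinates entering the definition of $L({\bf q})$, from which $M(L({\bf q}))={\bf q}$ follows directly.
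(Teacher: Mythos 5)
Your overall architecture is the same as the paper's: the forward inclusion is routine, and everything hinges on showing that the left inverse $L$ built from Propositions~\ref{proposition:left_inverse1} and~\ref{proposition:left_inverse2} is a two-sided inverse on $C$, i.e.\ that $M(L({\bf q}))={\bf q}$ for every ${\bf q}\in C$. You correctly isolate this as the main obstacle, but you do not actually prove it: the assertion that the binomial invariants ``force the ratios to be consistent'' and let you rewrite an arbitrary $q_{\bf g}$ in terms of the witness coordinates, ``from which $M(L({\bf q}))={\bf q}$ follows directly,'' is precisely the content of the lemma rather than a consequence one can read off. Carrying out that plan would require showing that the Sturmfels--Sullivant binomial generators imply, for every ${\bf g}$, a monomial identity of the form $q_{\bf g}^2\,{\bf q}^{b_{\bf g}}={\bf q}^{a_{\bf g}}$ relating $q_{\bf g}$ to the witness coordinates entering $L$; that is a nontrivial combinatorial statement about the toric ideal, and it is exactly the step your write-up leaves blank.

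The paper closes this gap with a short device you could adopt: the identity $q_{\bf g}=\sqrt{{\bf q}^{a_{\bf g}}/{\bf q}^{b_{\bf g}}}$, equivalently the polynomial identity $q_{\bf g}^2\,{\bf q}^{b_{\bf g}}={\bf q}^{a_{\bf g}}$, holds tautologically on the image $M(\{\check{f}^{(e)}:\check{f}^{(e)}>0\})$ because $L$ is a left inverse there; being polynomial, it persists on the whole Zariski closure of that image, hence on all of $C$ (which sits inside the positive part of that closure, since $C$ is cut out by the phylogenetic invariants together with ${\bf q}>0$); and on the positive part one may take positive square roots to recover $q_{\bf g}=\sqrt{{\bf q}^{a_{\bf g}}/{\bf q}^{b_{\bf g}}}$, i.e.\ $M\circ L=\mathrm{id}$ on $C$. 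This avoids any explicit manipulation of the binomial generators and settles in one stroke your worry about reproducing the non-witness coordinates. The remainder of your argument (positivity of $L({\bf q})$, $\check{f}^{(e)}(0)=1$ from $q_{00\ldots 0}=1$, and the transfer of the squared multiplicative inequalities) is fine and matches the paper; note that working with the squared inequalities is also what makes the sign of $\check{f}^{(e)}$ irrelevant, which is the point of the paper's preliminary observation that no constraints beyond ${\bf q}>0$ are needed to encode $\check{f}^{(e)}>0$.
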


Lemma~\ref{lemma:inequalities2} is very similar to~\cite[Proposition 3.5]{Matsen}, however, for the sake of completeness, we will give a proof here. We also include the positivity constraints that do not appear in~\cite[Proposition 3.5]{Matsen}. 

\begin{proof}
The inequalities ${\bf q} > 0$ are clearly valid inequalities. We will show that we do not have to additionally consider the inequalities $\check{f}^{(e)} > 0$ to construct inequalities for ${\bf q}$. Assume there is $\{\check{f}^{(e)}\}_{e \in E}$ with not all entries positive that satisfies all other inequalities in item (2) and maps to ${\bf q} > 0$. We claim that $\{|\check{f}^{(e)}|\}_{e \in E}$ also satisfies the same inequalities in item (2) and it clearly maps to the same ${\bf q}$. Indeed, since the inequalities are of the form $(\check{f}^{(e)})^{2K^{-1}_{g,:}} \geq 1$, it means that in the product $(\check{f}^{(e)})^{2K^{-1}_{g,:}}$ minus signs cancel out and hence the absolute values give the same product.

The map~(\ref{map3}) is an isomorphism between $\{\check{f}^{(e)}:\check{f}^{(e)}>0\}$ and the positive part of the Zariski closure of the image of $\{\check{f}^{(e)}:\check{f}^{(e)}>0\}$ under the map~(\ref{map3}). Indeed, let the composition of the maps in Propositions~\ref{proposition:left_inverse1} and~\ref{proposition:left_inverse2} with the map~(\ref{map3}), map ${\bf q}$ to $\{\sqrt{\frac{{\bf q}^{a_g}}{{\bf q}^{b_g}}}\}_{g \in G^n}$ for some vectors ${\bf a}_g,{\bf b}_g \in \mathbb{R}^{G^n}$. Since $q_g=\sqrt{\frac{{\bf q}^{a_g}}{{\bf q}^{b_g}}}$, or equivalently $q_g^2 {\bf q}^{b_g}={\bf q}^{a_g}$, for all ${\bf q}$ in the image, the same equation must be satisfied for all elements in the Zariski closure of the image. Moreover, $\sqrt{\frac{{\bf q}^{a_g}}{{\bf q}^{b_g}}}$ is well-defined on the positive part of the Zariski closure, hence we have the isomorphism. It follows that on the positive part of the Zariski closure we get the inequalities for ${\bf q}$ by substituting expressions for $[\check{f}^{(e)}]^2$ to multiplicative inequalities for $\check{f}^{(e)}$.
\end{proof}

This completes the proof that inequalities in items (1)-(4) are correct.
\end{proof}

\begin{example} \label{example:CFN_implicit_description} \rm
We will derive the implicit description of the CFN model on the rooted claw tree $T=K_{1,3}$. In addition to phylogenetic invariants in Example~\ref{running_example}, applying Proposition~\ref{procedure:implicit_description} gives the following inequalities in Fourier coordinates:
\begin{align}
&q_{000}=1,\nonumber\\
&{\bf q} >0,\nonumber\\
&\frac{q_{100} q_{010}}{q_{110}} \leq 1,
\frac{q_{110} q_{101}}{q_{011}} \leq 1,
\frac{q_{011} q_{010}}{q_{001}} \leq 1,
\frac{q_{001} q_{101}}{q_{100}} \leq 1. \label{ineq}
\end{align}
The inequality $\frac{q_{100} q_{010}}{q_{110}} \leq 1$ is for the hidden leaf corresponding to the root. Since $q_{000}=1$, we can multiply all the denominators by $q_{000}$ without changing the inequalities~(\ref{ineq}). Clearing denominators gives the following polynomial inequalities
\begin{align*}
&q_{000}=1,\\
&{\bf q} >0,\\
&q_{000} q_{110} - q_{100} q_{010} \geq 0,
q_{000} q_{011} - q_{110} q_{101} \geq 0,\\
&q_{000} q_{001} - q_{011} q_{010} \geq 0,
q_{000} q_{100} - q_{001} q_{101} \geq 0.
\end{align*}

By applying the discrete Fourier transformation, we get the implicit description of the CFN model on $K_{1,3}$ in the original coordinates
\begin{tiny}
\begin{align}
&p_{001}p_{010}-p_{000}p_{011}+p_{001}p_{100}-p_{000}p_{101}-p_{011}p_{110}-p_{101}p_{110}+p_{010}p_{111}+p_{100}p_{111}=0,\nonumber\\
&p_{001}p_{010}-p_{000}p_{011}+p_{010}p_{100}-p_{011}p_{101}-p_{000}p_{110}-p_{101}p_{110}+p_{001}p_{111}+p_{100}p_{111}=0,\nonumber\\
&p_{001}p_{100}+p_{010}p_{100}-p_{000}p_{101}-p_{011}p_{101}-p_{000}p_{110}-p_{011}p_{110}+p_{001}p_{111}+p_{010}p_{111}=0,\nonumber\\
&p_{000} + p_{001} + p_{010} + p_{011} + p_{100} + p_{101} + p_{110} + p_{111} = 1,\nonumber\\
& p_{000} - p_{001} + p_{010} - p_{011} + p_{100} - p_{101} + p_{110} - p_{111} > 0,\label{ineq:5}\\
& p_{000} + p_{001} - p_{010} - p_{011} + p_{100} + p_{101} - p_{110} - p_{111} > 0,\label{ineq:6}\\
& p_{000} - p_{001} - p_{010} + p_{011} + p_{100} - p_{101} - p_{110} + p_{111} > 0,\label{ineq:7}\\
& p_{000} + p_{001} + p_{010} + p_{011} - p_{100} - p_{101} - p_{110} - p_{111} > 0,\label{ineq:8}\\
& p_{000} - p_{001} + p_{010} - p_{011} - p_{100} + p_{101} - p_{110} + p_{111} > 0,\label{ineq:9}\\
& p_{000} + p_{001} - p_{010} - p_{011} - p_{100} - p_{101} + p_{110} + p_{111} > 0,\label{ineq:10}\\
& p_{000} - p_{001} - p_{010} + p_{011} - p_{100} + p_{101} + p_{110} - p_{111} > 0,\label{ineq:11}\\
&-p_{010}p_{100}-p_{011}p_{100}-p_{010}p_{101}-p_{011}p_{101}+p_{000}p_{110}+p_{001}p_{110}+p_{000}p
       _{111}+p_{001}p_{111}  \geq 0,\label{ineq:12}\\
&-p_{001}p_{010}+p_{000}p_{011}+p_{000}p_{100}-p_{001}p_{101}-p_{010}p_{110}-p_{101}p_{110}+p_{011}p
       _{111}+p_{100}p_{111}  \geq 0,\label{ineq:13}\\
&p_{000}p_{010}-p_{001}p_{011}+p_{010}p_{100}-p_{0
       11}p_{101}+p_{000}p_{110}+p_{100}p_{110}-p_{001}p
       _{111}-p_{101}p_{111}  \geq 0,\label{ineq:14}\\
&p_{000}p_{001}+p_{001}p_{010}+p_{000}p_{011}+p_{0
       10}p_{011}-p_{100}p_{101}-p_{101}p_{110}-p_{100}p
       _{111}-p_{110}p_{111}  \geq 0.\label{ineq:15}
\end{align}
\end{tiny}
\end{example}

\begin{remark} \rm
Identifiability of parameters of a phylogenetic model means that if for a fixed tree two sets of parameters map to the same joint probabilities at leaves, then these sets of parameters must be equal. Generic identifiability means that this statement is true with probability one. The identifiability of the CFN model was shown in~\cite[Theorem 1]{Hendy}, of the Kimura 3-parameter model in~\cite[Theorem 7]{SteelHendyPenny} and the generic identifiability of the general Markov model in~\cite{Chang}. The identifiability of any group-based model follows also from the proof of Proposition~\ref{procedure:implicit_description}, since each of the maps in~(\ref{four_maps}) is isomorphism in the region we are interested in.
\end{remark}

\begin{corollary}\label{corollary:boundary}
Consider a symmetric group-based model. Any ${\bf p}$ satisfying the implicit constraints of the model obtained by Proposition~\ref{procedure:implicit_description} that satisfies one of the inequalities with equality comes from a parametrization with an off-diagonal zero in the rate matrix $Q^{(e)}$ for some $e \in E$.
\end{corollary}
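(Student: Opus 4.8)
The plan is to exploit the fact, established in the proof of Proposition~\ref{procedure:implicit_description} and reiterated in the identifiability remark, that each of the four maps in~(\ref{four_maps}) restricts to a bijection onto its image on the region of interest. Given a point $\mathbf p$ that satisfies every implicit constraint, I would first pass to $\mathbf q = H\mathbf p$ and then pull $\mathbf q$ back through maps (3), (2) and (1) to recover a unique tuple $\{\psi^{(e)}\}_{e\in E}$ satisfying $\sum_{g\in G}\psi^{(e)}(g)=0$ and $\psi^{(e)}(g)\ge 0$ for all non-zero $g$. Concretely: since $\mathbf q>0$ and $\mathbf q$ lies on the phylogenetic variety, the isomorphism in the proof of Lemma~\ref{lemma:inequalities2} supplies a unique positive $\{\check f^{(e)}\}$ mapping to it; setting $\check\psi^{(e)}(g)=\log\check f^{(e)}(g)$ inverts~(\ref{map2}), and $\psi^{(e)}=K^{-1}\check\psi^{(e)}$ inverts map (1). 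This realizes $\mathbf p$ as coming from the rate matrices $Q^{(e)}$ with $Q^{(e)}_{g',h'}=\psi^{(e)}(h'-g')$.

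Next I would track which inequality has become an equality along this chain. The inequalities in the final description are of two kinds: the strict positivity constraints $\mathbf q>0$, and the non-strict inequalities obtained by substituting the expressions for $[\check f^{(e)}]^2$ from Propositions~\ref{proposition:left_inverse1} and~\ref{proposition:left_inverse2} into $(\check f^{(e)})^{2K^{-1}_{g,:}}\ge 1$. A point satisfying all constraints cannot turn a strict inequality $q_{\mathbf g}>0$ into an equality, so the saturated inequality must be one of the non-strict ones. Here I would use that equality is preserved under each map: by the construction in Lemma~\ref{lemma:inequalities2} the substitution of $[\check f^{(e)}]^2$ holds as an equality on the positive part of the Zariski closure, so $(\check f^{(e)})^{2K^{-1}_{g,:}}=1$; taking the positive square root and using the equivalence $e^{a^{T}x}=1\iff a^{T}x=0$ from Lemma~\ref{lemma:inequalities1} yields $(K^{-1}\check\psi^{(e)})(g)=0$; and since map (1) is linear with $\psi^{(e)}=K^{-1}\check\psi^{(e)}$, this is exactly $\psi^{(e)}(g)=0$ for some edge $e$ and some non-zero $g\in G$.

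Finally, because $Q^{(e)}_{g',h'}=\psi^{(e)}(h'-g')$, the vanishing $\psi^{(e)}(g)=0$ for non-zero $g$ forces every entry $Q^{(e)}_{g',\,g'+g}$ to be zero, and these are off-diagonal entries since $g\neq 0$. Hence $\mathbf p$ comes from a parametrization in which $Q^{(e)}$ has an off-diagonal zero, as claimed.

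I expect the main obstacle to be bookkeeping rather than conceptual difficulty: one must match each polynomial inequality in $\mathbf p$ with the unique underlying constraint $\psi^{(e)}(g)\ge 0$ from which it descends, and verify that equality is neither created nor destroyed when clearing denominators and squaring in passing from the multiplicative inequalities to the polynomial ones. Care is also needed to distinguish the boundary coming from an off-diagonal zero of a genuine finite rate matrix---the case the corollary addresses, where $\check f^{(e)}$ stays positive---from the degenerate locus where some $q_{\mathbf g}\to 0$, which corresponds to a limit at infinity rather than to a rate matrix.
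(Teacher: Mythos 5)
Your proposal is correct and follows essentially the same route as the paper's own proof: rule out the strict inequalities, then trace a saturated non-strict inequality back through the chain of invertible maps in~(\ref{four_maps}) to the condition $\psi^{(e)}(g)=0$ for some non-zero $g$, which is an off-diagonal zero of $Q^{(e)}$. The paper states this in three sentences; your version merely makes the intermediate substitutions and the bookkeeping explicit.
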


\begin{proof}
There are two different kind of inequalities in item (4) of Proposition~\ref{procedure:implicit_description}. The strict inequalities can never be satisfied with equality. The non-strict inequalities in each step are obtained by substituting the inverse map to the inequalities in the previous step. Hence ${\bf p}$ satisfies one of the non-strict inequalities with equality if and only if it has a preimage $\{\psi^{(e)}\}_{e \in E}$ that satisfies one of the inequalities $\psi^{(e)}(g) \geq 0$ with equality.
\end{proof}

\begin{example} \rm
We consider the CFN model. A joint probability vector ${\bf p}$ satisfying the assumptions of Corollary~\ref{corollary:boundary}, has in its parametrization the rate matrix $Q^{(e)}=\begin{pmatrix}
0 & 0\\
0 & 0
\end{pmatrix}$ for some $e \in E$. The transition matrix corresponding to the same edge is
$P^{(e)}=\begin{pmatrix}
1 & 0\\
0 & 1
\end{pmatrix}$.
\end{example}

\section{Embedding problem}\label{section:embedding}

Deciding whether a stochastic matrix $P$ is the matrix exponential of a rate matrix $Q$, is called the embedding problem. A stochastic matrix that can be represented as the matrix exponential of a rate matrix, is said to be embeddable. An overview on embeddable stochastic matrices is given in~\cite{Davies}. We call a stochastic matrix that is invariant under the action of $G$ $G$-embeddable if it can be written as a matrix exponential of a rate matrix that are invariant under the action of $G$. The aim of this section is to give an exact characterization of $G$-embeddable stochastic matrices for any symmetric group-based model.

\begin{example} \label{example:mutation_matrices_CFN} \rm
Without loss of generality, one may assume that a rate matrix in the CFN model takes the form
$$
Q^{(e)}=\begin{pmatrix}
-1 & 1\\
1 & -1
\end{pmatrix}.
$$
Applying $P^{(e)}=\text{exp}(t_e Q^{(e)})$ for $t_e \geq 0$ gives
$$
P^{(e)}=\frac{1}{2}\begin{pmatrix}
1+e^{-2t_e} & 1-e^{-2t_e}\\
1-e^{-2t_e} & 1+e^{-2t_e}
\end{pmatrix}.
$$
Studying this matrix, one obtains that $P^{(e)}$ is the matrix exponential of a rate matrix in the CFN model if and only if  $f^{(e)}(0)+f^{(e)}(1)=1$, $1 \geq  f^{(e)}(0) > \frac{1}{2}$ and $\frac{1}{2} >  f^{(e)}(1) \geq 0$. These conditions are equivalent to $P^{(e)}$ being a stochastic matrix invariant under the $\mathbb{Z}_2$-action and satisfying $\det(P^{(e)})>0$, or equivalently $\text{tr}(P^{(e)})>1$. This result is stated for $2\times 2$ stochastic matrices (not necessarily invariant under the $\mathbb{Z}_2$-action) in \cite[Proposition 2]{Kingman}.
\end{example}

For $n \times n$ stochastic matrices, a necessary condition for being embeddable is given in~\cite[Proposition 3]{Kingman}: The set set of embeddable matrices is relatively closed as a subset of the space of all stochastic $n \times n$ matrices with positive determinant. An exact characterization for stochastic matrices embeddable in the Kimura 3-parameter model (the group-based model with $G=\mathbb{Z}_2 \times \mathbb{Z}_2$), is presented in~\cite[Theorem 3.2]{RocaFernandez}. We present an exact characterization for stochastic matrices embeddable in any symmetric group-based model (for any number of states $k$). 

\begin{theorem}\label{theorem:embeddable_matrices}
Fix a group $G$ and let $K$ be the matrix of the Fourier transform of $G$. Let $P^{(e)}_{g,h}=f^{(e)}(h-g)$ for some stochastic vector $f^{(e)} \in \mathbb{R}^G$ satisfying $f^{(e)}(g)=f^{(e)}(-g)$ for every $g \in G$. Then $P^{(e)}=\exp( Q^{(e)})$ where $Q^{(e)}_{g,h}=\psi^{(e)}(h-g)$ is a rate matrix for $\psi^{(e)}$ satisfying $\psi^{(e)}(g)=\psi^{(e)}(-g)$  if and only if
\begin{enumerate}[(1)]
\item $\langle \text{Re}(K_{g,\cdot}), f^{(e)} \rangle > 0$ for every non-zero $g \in G$;

\item
\begin{small}
$$\prod_{h \in G:\text{Re}(K^{-1}_{g,h})>0} \langle \text{Re}(K_{h,\cdot}), f^{(e)} \rangle^{|\text{Re}(K^{-1}_{g,h})|} \geq \prod_{h \in G:\text{Re}(K^{-1}_{g,h})<0} \langle \text{Re}(K_{h,\cdot}), f^{(e)} \rangle^{|\text{Re}(K^{-1}_{g,h})|}$$
\end{small}
for every non-zero $g \in G$.
\end{enumerate}
Here $\text{Re}(\cdot)$ denotes the real part of a number or a vector.
\end{theorem}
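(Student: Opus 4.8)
The plan is to pass everything to Fourier coordinates, where the embedding condition becomes transparent, and then to reuse the analysis already carried out in Proposition~\ref{procedure:implicit_description} together with Lemma~\ref{lemma:inequalities1}. The only genuinely new ingredient is that the symmetry hypothesis $f^{(e)}(g)=f^{(e)}(-g)$ forces all the relevant Fourier transforms to be real, which is exactly what allows the complex matrices $K$ and $K^{-1}$ to be replaced by their real parts in the final statement.

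First I would record the dictionary between the matrix equation and its Fourier shadow. Since $Q^{(e)}$ and $P^{(e)}$ are the $G$-circulant matrices built from $\psi^{(e)}$ and $f^{(e)}$, they are simultaneously diagonalized by $K$, and matrix exponentiation becomes entrywise exponentiation of the eigenvalues. Concretely, by~(\ref{map2}) the identity $P^{(e)}=\exp(Q^{(e)})$ is equivalent to $\check{f}^{(e)}(g)=\exp(\check{\psi}^{(e)}(g))$ for all $g$, where $\check{f}^{(e)}=Kf^{(e)}$ and $\check{\psi}^{(e)}=K\psi^{(e)}$. Next I would translate the rate-matrix requirements on $Q^{(e)}$: the off-diagonal entries $Q^{(e)}_{g,h}=\psi^{(e)}(h-g)$ being nonnegative is exactly $\psi^{(e)}(g)\ge 0$ for all nonzero $g$, and the row-sum-zero condition is $\sum_{g\in G}\psi^{(e)}(g)=0$, equivalently $\check{\psi}^{(e)}(0)=0$. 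Thus $G$-embeddability of $P^{(e)}$ holds precisely when $f^{(e)}$ lies in the image, under the maps in~(\ref{four_maps}), of the set of $\psi^{(e)}$ considered in Proposition~\ref{procedure:implicit_description}.

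At this point I would invoke items (1) and (2) of Proposition~\ref{procedure:implicit_description} and Lemma~\ref{lemma:inequalities1} directly: the image is cut out by $\check{f}^{(e)}(0)=1$, by $\check{f}^{(e)}(g)>0$ for all $g$, and by the multiplicative inequalities $(\check{f}^{(e)})^{K^{-1}_{g,:}}\ge 1$ for every nonzero $g$. It remains to rewrite these in terms of $f^{(e)}$ itself, and here the symmetry enters. Because $f^{(e)}$ is real with $f^{(e)}(g)=f^{(e)}(-g)$ and $\hat{g}(-h)=\overline{\hat{g}(h)}$, one checks that $\check{f}^{(e)}(g)=\overline{\check{f}^{(e)}(g)}$ is real, so $\check{f}^{(e)}(g)=\langle\text{Re}(K_{g,\cdot}),f^{(e)}\rangle$; the same argument shows $\check{\psi}^{(e)}$ and hence $\psi^{(e)}=K^{-1}\check{\psi}^{(e)}$ are real. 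Consequently $\psi^{(e)}(g)=\sum_{h}\text{Re}(K^{-1}_{g,h})\,\check{\psi}^{(e)}(h)$, and taking exponentials turns $\psi^{(e)}(g)\ge 0$ into $\prod_{h}(\check{f}^{(e)}(h))^{\text{Re}(K^{-1}_{g,h})}\ge 1$. Splitting this product according to the sign of $\text{Re}(K^{-1}_{g,h})$ and substituting $\check{f}^{(e)}(h)=\langle\text{Re}(K_{h,\cdot}),f^{(e)}\rangle$ yields conditions (1) and (2) of the theorem, where condition (1) is the positivity $\check{f}^{(e)}(g)>0$ for nonzero $g$ (positivity at $g=0$ being automatic from $\check{f}^{(e)}(0)=\sum_h f^{(e)}(h)=1$).

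For the converse I would run the same chain backwards: given $f^{(e)}$ satisfying (1) and (2), set $\check{\psi}^{(e)}(g)=\log\check{f}^{(e)}(g)$, which is legitimate since every $\check{f}^{(e)}(g)>0$, and define $\psi^{(e)}=K^{-1}\check{\psi}^{(e)}$; the reality and symmetry of $\psi^{(e)}$ follow from those of $\check{f}^{(e)}$, the vanishing $\check{\psi}^{(e)}(0)=0$ gives row sums zero, and condition (2) gives $\psi^{(e)}(g)\ge 0$ for nonzero $g$, so that $Q^{(e)}$ is a genuine symmetric $G$-invariant rate matrix with $P^{(e)}=\exp(Q^{(e)})$ by~(\ref{map2}). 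I expect the only delicate point to be the bookkeeping around real parts: one must verify that replacing $K$ and $K^{-1}$ by their real parts is valid (which is precisely where the symmetry hypothesis is used) and that, unlike in Proposition~\ref{procedure:implicit_description}, no squaring of the multiplicative inequalities is needed here, because $\check{f}^{(e)}$ is accessed directly through $Kf^{(e)}$ rather than reconstructed from ${\bf q}$ only up to sign.
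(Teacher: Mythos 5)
Your proposal is correct and follows essentially the same route as the paper: pass to Fourier coordinates, invoke item (2) of Proposition~\ref{procedure:implicit_description} (via Lemma~\ref{lemma:inequalities1}) for the conditions on $\check{f}^{(e)}$, and use the symmetry $f^{(e)}(g)=f^{(e)}(-g)$ together with $K_{g,-h}=\overline{K_{g,h}}$ to justify replacing $K$ and $K^{-1}$ by their real parts. Your explicit treatment of the converse and your remark that no squaring of the multiplicative inequalities is needed (since $\check{f}^{(e)}=Kf^{(e)}$ is known directly, not only up to sign) are both consistent with, and slightly more detailed than, the paper's argument.
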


\begin{proof}
The conditions on $f^{(e)}$ follow from the conditions on $\check{f}^{(e)}$ by substituting $\check{f}^{(e)}$ by $Kf^{(e)}$. By Proposition~\ref{procedure:implicit_description} item (2), the conditions for $\check{f}^{(e)}$ are $\check{f}^{(e)}(0)=1,(\check{f}^{(e)})^{K^{-1}_{g,:}} \geq 1$ for all non-zero $g \in G$ and $\check{f}^{(e)}(g) > 0$ for all $g \in G$.

The equality $\check{f}^{(e)}(0)=1$ gives $\sum_{g \in G} f^{(e)}(g)=1$. The inequalities (1) follow from $\check{f}^{(e)}(g) > 0$ for all non-zero $g \in G$. The equation $K_{g,-h}=\overline{K_{g,h}}$ always holds for a discrete Fourier transform and $f^{(e)}(h)=f^{(e)}(-h)$ holds because of symmetry. Hence
\begin{small}
\begin{align*}
K_{g,h} f^{(e)}(h) + K_{g,-h} f^{(e)}(-h)&=K_{g,h} f^{(e)}(h) + \overline{K_{g,h}} f^{(e)}(h) \\
&= \text{Re}(K_{g,h}) f^{(e)}(h) + \text{Re} (K_{g,-h}) f^{(e)}(-h)
\end{align*}
and $\langle K_{g,\cdot}, f^{(e)} \rangle=\langle \text{Re}(K_{g,\cdot}), f^{(e)} \rangle$.
\end{small}

The inequalities (2) follow from $(\check{f}^{(e)})^{K^{-1}_{g,:}} \geq 1$ for all non-zero $g \in G$. We can consider the real part for exponents, since
\begin{small}
\begin{align*}
&(\check{f}^{(e)}(h))^{K^{-1}_{g,h}} (\check{f}^{(e)}(-h))^{K^{-1}_{g,-h}} =  (\check{f}^{(e)}(h))^{K^{-1}_{g,h}} (\check{f}^{(e)}(h))^{\overline{K^{-1}_{g,h}}}\\
=&(\check{f}^{(e)}(h))^{2\text{Re}(K^{-1}_{g,h})}=(\check{f}^{(e)}(h))^{\text{Re}(K^{-1}_{g,h})} (\check{f}^{(e)}(-h))^{\text{Re}(K^{-1}_{g,-h})}.
\end{align*}
\end{small}
\end{proof}

\begin{remark} \rm
The inequalities (1) in Theorem~\ref{theorem:embeddable_matrices} imply $\det(P^{(e)})>0$. Indeed, mutation matrices for symmetric group-based models are real symmetric matrices. Their eigenvalues are $\hat{f}^{(e)}(g)=\langle \text{Re}(K_{g,\cdot}), f^{(e)} \rangle$ and the determinant is
$$
\det(P^{(e)})=\prod_{g \in G} \hat{f}^{(e)}(g)=\prod_{g \in G} \langle \text{Re}(K_{g,\cdot}), f^{(e)} \rangle.
$$
All factors in this product are positive by inequalities (1) in Theorem~\ref{theorem:embeddable_matrices} and thus $\det(P^{(e)})>0$. More precisely, the set of $G$-embeddable matrices for a symmetric group-based model is a relatively closed subset of one connected component of the complement of $\det(P^{(e)})=0$ which is given by inequalities~(1).
\end{remark}

\begin{example} \rm
The discrete Fourier transformation for the CFN model is given by the matrix
$$
K=
\begin{pmatrix}
1 & 1 \\
1 & -1
\end{pmatrix}.
$$
By Theorem~\ref{theorem:embeddable_matrices}, the conditions on mutation matrices $P^{(e)}$ in the CFN model are
$$
f^{(e)}(0)+f^{(e)}(1)=1 \text{ and } 1 \geq f^{(e)}(0)-f^{(e)}(1)>0.
$$
An easy check verifies that these conditions are equivalent to the conditions in Example~\ref{example:mutation_matrices_CFN}.
\end{example}

\begin{example} \rm
The discrete Fourier transformation for the Kimura 3-parameter model with $G=\mathbb{Z}_2 \times \mathbb{Z}_2$ is given by the matrix
$$
K=
\begin{pmatrix}
1 & 1 & 1 & 1\\
1 & -1 & 1 & -1\\
1 & 1 & -1 & -1\\
1 & -1 & -1 &1
\end{pmatrix}.
$$
Write
\begin{align*}
w=f^{(e)}(0,0)+f^{(e)}(0,1)+f^{(e)}(1,0)+f^{(e)}(1,1),\\
x=f^{(e)}(0,0)-f^{(e)}(0,1)+f^{(e)}(1,0)-f^{(e)}(1,1),\\
y=f^{(e)}(0,0)+f^{(e)}(0,1)-f^{(e)}(1,0)-f^{(e)}(1,1),\\
z=f^{(e)}(0,0)-f^{(e)}(0,1)-f^{(e)}(1,0)+f^{(e)}(1,1).\\
\end{align*}
By Theorem~\ref{theorem:embeddable_matrices}, the conditions on the mutation matrices to be $\mathbb{Z}_2 \times \mathbb{Z}_2$-embeddable are
\begin{align}\label{Z_2xZ_2_mutation_matrix_conditions}
w=1,x>0,y>0,z>0,x \geq yz, y \geq xz, z \geq yx.
\end{align}
This characterization for the Kimura 3-parameter model is first given in~\cite[Theorem 3.2]{RocaFernandez}. Moreover, it is shown in \cite{RocaFernandez} that matrices satisfying~(\ref{Z_2xZ_2_mutation_matrix_conditions}) constitute $\frac{3}{32}$ of all the stochastic matrices invariant under the $\mathbb{Z}_2 \times \mathbb{Z}_2$ action and $\frac{1}{2}$ of all such stochastic matrices with positive eigenvalues.
\end{example}

\begin{remark} \rm
By~\cite[Corollary on page 18]{Kingman}, the map from rate matrices to mutation matrices is locally homeomorphic except possibly when the rate matrix has a pair of eigenvalues differing by a non-zero multiple of $2 \pi i$. Since for symmetric group-based models rate matrices are real symmetric, then all their eigenvalues are real and hence the map from rate matrices to mutation matrices is a homeomorphism. This can be also seen by analyzing maps in Proposition~\ref{procedure:implicit_description}. Therefore the boundaries of embeddable mutation matrices of symmetric group-based models are images of the boundaries of the rate matrices. For general Markov model, the boundaries of embeddable mutation matrices are characterized in~\cite[Propositions 5 and 6]{Kingman}.
\end{remark}

\begin{corollary}
A $G$-embeddable mutation matrix lies on the boundary of the set of $G$-embeddable mutation matrices for a symmetric group-based model if and only if it satisfies at least one of the inequalities in Theorem~\ref{theorem:embeddable_matrices} with equality.
\end{corollary}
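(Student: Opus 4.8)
The plan is to realize the set $\mathcal E$ of $G$-embeddable matrices as the homeomorphic image of a polyhedral cone and to read off the boundary from the cone's facets. Write $V=\{\psi^{(e)} : \psi^{(e)}(g)=\psi^{(e)}(-g),\ \sum_{g}\psi^{(e)}(g)=0\}$ for the linear space of symmetric rate data, and let $S=\{\psi^{(e)} \in V : \psi^{(e)}(g)\ge 0 \text{ for all nonzero } g\}$ be the admissible cone. Composing the Fourier transform, the entrywise exponential, and the inverse Fourier transform gives the map $\Phi\colon\psi^{(e)}\mapsto f^{(e)}$; by the remark preceding the corollary this is a homeomorphism from $V$ onto the region $U=\{f^{(e)} : \sum_g f^{(e)}(g)=1,\ \langle\mathrm{Re}(K_{g,\cdot}),f^{(e)}\rangle>0 \text{ for all nonzero }g\}$, which is the connected component of the complement of $\{\det=0\}$ cut out by the inequalities (1). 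By Theorem~\ref{theorem:embeddable_matrices}, $\mathcal E=\Phi(S)$, and throughout the boundary is taken relative to $U$ (equivalently relative to its affine hull), as in that remark.

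I would first translate the tightness of the defining (in)equalities back to $V$. The strict inequalities (1) are precisely the conditions defining $U$, so they hold strictly at every point of $U\supseteq\mathcal E$ and can never be met with equality by an embeddable matrix. For the inequalities (2), taking logarithms in the identity $\prod_{h}\check f^{(e)}(h)^{\mathrm{Re}(K^{-1}_{g,h})}=\exp\bigl(\psi^{(e)}(g)\bigr)$ (which comes from $\check f^{(e)}=\exp(\check\psi^{(e)})$ and $\psi^{(e)}=K^{-1}\check\psi^{(e)}$, reduced to real parts as in the proof of Theorem~\ref{theorem:embeddable_matrices}) shows that inequality (2) for a fixed nonzero $g$ holds with equality if and only if $\psi^{(e)}(g)=0$. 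Hence ``$f^{(e)}$ meets at least one inequality of Theorem~\ref{theorem:embeddable_matrices} with equality'' is equivalent to ``$\Phi^{-1}(f^{(e)})$ satisfies $\psi^{(e)}(g)=0$ for some nonzero $g$,'' i.e.\ $\Phi^{-1}(f^{(e)})$ lies on the relative boundary of the cone $S$.

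The conclusion would then follow from two facts. First, $S$ is a full-dimensional polyhedral cone in $V$ (its interior is $\{\psi^{(e)}(g)>0 \text{ for all nonzero }g\}$), so its topological boundary relative to $V$ is exactly the union of the facets $\{\psi^{(e)}(g)=0\}$ over nonzero $g$. Second, a homeomorphism preserves interiors and closures of subsets, hence boundaries; applying this to $\Phi\colon V\to U$ gives $\Phi(\partial_V S)=\partial_U\Phi(S)=\partial_U\mathcal E$. Combining with the previous paragraph, $f^{(e)}\in\partial_U\mathcal E$ if and only if some inequality of Theorem~\ref{theorem:embeddable_matrices} is tight, which is the claim.

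The step I expect to require the most care is fixing the correct ambient space for the boundary. The statement is false if one takes the boundary inside the probability simplex: for instance in the CFN model the identity matrix ($f^{(e)}(1)=0$) is embeddable and meets inequality (2) with equality, yet it is a vertex of the simplex and has a full simplex-neighborhood of embeddable matrices. Taking the boundary inside $U$ (as in the preceding remark) repairs this, since the perturbation that certifies it as a boundary point---lowering $\psi^{(e)}(g_0)$ below $0$ while preserving symmetry and the zero-sum condition---produces, via $\Phi$, nearby matrices in $U$ that violate inequality (2) and are therefore non-embeddable by Theorem~\ref{theorem:embeddable_matrices}, even though they leave the simplex. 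The homeomorphism property from the remark is exactly what makes this perturbation argument and the boundary correspondence rigorous.
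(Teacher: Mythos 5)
Your proposal is correct and follows essentially the same route as the paper: the corollary is stated there without a separate proof, resting on the immediately preceding remark that the map from rate matrices to mutation matrices is a homeomorphism (so boundaries map to boundaries) together with the observation, already implicit in the proof of Theorem~\ref{theorem:embeddable_matrices}, that equality in the inequalities (2) corresponds exactly to $\psi^{(e)}(g)=0$, i.e.\ to the facets of the cone of admissible rate data. Your added care about the ambient space in which the boundary is taken (relative to the component cut out by the strict inequalities (1), not relative to the probability simplex, as the identity-matrix example shows) is a genuine and correct clarification of a point the paper leaves implicit in its remark about relative closedness.
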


\section{Maximum likelihood estimation} \label{section:MLE}

Let ${\bf u}=(u_{i_1,\ldots,i_n})_{(i_1,\ldots,i_n) \in G^n}$ be a vector of observations at leaves. The log-likelihood function of a phylogenetic model is
$$
l_{{\bf u}}({\bf p})=\sum_{(i_1,\ldots,i_n) \in G^n} u_{i_1,\ldots,i_n} \log p_{i_1,\ldots,i_n}.
$$
Maximum likelihood estimation aims to find  a vector of joint probabilities at leaves or model parameters (if the joint probabilities are considered as polynomials in model parameters) that lie in the model and maximize the log-likelihood function for a given observation ${\bf u}$.

\begin{example} \rm
In~\cite[Example 14]{Hosten_etal}, maximum likelihood estimation on the Zariski closure of the CFN model on $K_{1,3}$ is considered. This is the model that is defined by the equations in Example~\ref{example:CFN_implicit_description}. It is shown that for  generic data, the likelihood function has $92$ complex critical points on the model. This is called the ML degree of the model. Using tools from numerical algebraic geometry as in~\cite{Gross_etal}, one can compute the $92$ critical points and among the real critical points choose the one that gives the maximal value of the log-likelihood function.

However, the MLE can lie on the boundary of a statistical model or even not exist. Neither of this can be detected by considering only the Zariski closure of the model. We will see the latter happening for the CFN model on $K_{1,3}$  in Example~\ref{example:MLE}.
\end{example}

In practice, the MLE is solved using numerical methods, however, these methods are only guaranteed to give a local maxima of the log-likelihood function and not necessarily the global maximum. Usually one runs these methods for different starting points and chooses the output that maximizes the log-likelihood function.

An alternative is to solve the Karush-Kuhn-Tucker (KKT) conditions to find all the critical points of the log-likelihood function and then choose the real solution that maximizes the log-likelihood function. Consider the optimization problem
\begin{align}
&\max F(x) \nonumber\\
&\text{subject to} \label{optimization problem}\\
&\qquad G_i(x) \geq 0 \text{ for } i=1,\ldots,m, \nonumber \\
&\qquad H_j(x)=0 \text{ for } j=1,\ldots,l. \nonumber
\end{align}
The KKT conditions are
\begin{align}
&\nabla F(x)=\sum_{i=1}^m \mu_i \nabla G_i(x)+\sum_{j=1}^l \lambda_j \nabla H_j(x),\label{equations:KKT1}\\
&G_i(x) \geq 0 \text{ for } i=1,\ldots,m,\label{inequalities:KKT1}\\
&H_j(x)=0 \text{ for } j=1,\ldots,l,\label{equations:KKT2}\\
&\mu_i \geq 0 \text{ for } i=1,\ldots,m,\label{inequalities:KKT2}\\
&\mu_i G_i(x)=0 \text{ for } i=1,\ldots,m.\label{equations:KKT3}
\end{align}
If $x^*$ is a local optimum and the optimization problem satisfies first-order constraint qualifications, then there exist $\mu_i$, where $i=1,\ldots,m$, and $\lambda_j$, where $j=1,\ldots,l$, such that $x^*$ satisfies the KKT conditions above. One first-order constraint qualification is the constant rank constraint qualification (CRCQ) defined in~\cite{Janin}. A point satisfies the CRCQ if there is a neighborhood of the point where gradients of the equality constraints and gradients of the active inequality constraints have constant rank.

In the rest of the section, we assume that conditions~(\ref{equations:KKT1})-(\ref{equations:KKT3}) are polynomial. In this case, a point satisfies the CRCQ if it is a smooth point of the variety defined by the equality and active inequality constraints. Finding all solutions of the equations~(\ref{equations:KKT1}), (\ref{equations:KKT2}) and (\ref{equations:KKT3}) in the KKT conditions is computationally heavier than using numerical methods for finding the MLE, however it is still desirable since it provides the guarantee of finding the global maximum. Symbolic solving of the equations~(\ref{equations:KKT1}), (\ref{equations:KKT2}) and (\ref{equations:KKT3})  using Gr\"oeber basis methods is possible only for small instances. An alternative is to use numerical algebraic geometry and homotopy continuation methods that find approximations of isolated complex solutions of a system of polynomial equations with probability one. This approach is taken in~\cite{Gross_etal} for optimization problems in the life sciences. Furthermore, we suggest a ``decomposition'' of the KKT conditions into parts that makes solving them easier.

Let $L$ be the ideal generated by the equations~(\ref{equations:KKT1}),~(\ref{equations:KKT2}) and~(\ref{equations:KKT3}) in the KKT conditions. For $S \subseteq [m]$, let $L_S$ be the ideal generated by the Lagrange conditions for the optimization problem
\begin{align*}
&\max F(x)\\
&\text{subject to}\\
&\qquad G_i(x)=0 \text{ for } i \in S,\\
&\qquad H_j(x)=0 \text{ for } j=1,\ldots,l.
\end{align*}
Specifically, let $L_S$ be generated by the equations
\begin{align*}
&-\nabla F(x)+\sum_{i\in S} \mu_i \nabla G_i(x)+\sum_{j=1}^l \lambda_j \nabla H_j(x),\\
&G_i(x) \text{ for } i\in S,\\
&H_j(x) \text{ for } j=1,\ldots,l.
\end{align*}
We denote by $I_S$ the ideal generated by the constraints in the above optimization problem, i.e. $I_S=\langle G_i, H_j: i \in S, j=1,\ldots,l \rangle$.

\begin{theorem} \label{theorem:KKT_decomposition}
Let $L$ and $L_S$ be as defined above. Then
$$
\sqrt{\bigcap_{S \subseteq [m]} (L_S \cap \mathbb{C}[x])}=\sqrt{L \cap \mathbb{C}[x]},
$$
where $\sqrt{\cdot}$ denotes the radical of an ideal.
\end{theorem}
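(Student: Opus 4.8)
The plan is to reduce the claimed ideal identity to an equality of complex varieties and then invoke the Nullstellensatz. Throughout I work over $\mathbb{C}$, so that Hilbert's Nullstellensatz applies: two ideals have the same radical if and only if they define the same variety. The three standard tools I will use are (i) the Closure Theorem of elimination theory, which states that for any ideal $I \subseteq \mathbb{C}[x,\mu,\lambda]$ the elimination ideal $I \cap \mathbb{C}[x]$ satisfies $V(I \cap \mathbb{C}[x]) = \overline{\pi_x(V(I))}$, where $\pi_x$ is the projection forgetting the multiplier coordinates and the bar denotes Zariski closure; (ii) the identity $V\big(\bigcap_\alpha I_\alpha\big) = \bigcup_\alpha V(I_\alpha)$ for a finite family of ideals; and (iii) the fact that Zariski closure commutes with finite unions.

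First I would decompose $V(L)$ using complementary slackness. Since $L$ contains the products $\mu_i G_i(x)$ coming from~(\ref{equations:KKT3}), each point of $V(L)$ satisfies, for every $i$, either $\mu_i = 0$ or $G_i(x) = 0$. Distributing the intersection $\bigcap_i \big(V(\mu_i) \cup V(G_i)\big)$ over the union and intersecting with the common stationarity and equality equations gives
$$V(L) = \bigcup_{S \subseteq [m]} V\Big(\,\text{stationarity},\ (H_j)_j,\ (G_i)_{i \in S},\ (\mu_i)_{i \notin S}\,\Big).$$
On the $S$-th piece the multipliers $\mu_i$ with $i \notin S$ vanish, so the stationarity equation~(\ref{equations:KKT1}) collapses to $-\nabla F + \sum_{i \in S}\mu_i \nabla G_i + \sum_j \lambda_j \nabla H_j = 0$; together with $G_i = 0$ for $i \in S$ and $H_j = 0$ this is precisely the generating set of $L_S$ (in the variables $x$, $(\mu_i)_{i\in S}$, $\lambda$), augmented only by the equations $\mu_i = 0$ for $i \notin S$ that fix the remaining multiplier coordinates. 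Projecting to $x$-space annihilates all multiplier coordinates, so $\pi_x$ of the $S$-th piece equals $\pi_x(V(L_S))$, and hence $\pi_x(V(L)) = \bigcup_{S \subseteq [m]} \pi_x(V(L_S))$.

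Next I would assemble the two sides. Taking Zariski closures of the previous identity and using that closure commutes with the finite union over subsets $S$ gives $\overline{\pi_x(V(L))} = \bigcup_S \overline{\pi_x(V(L_S))}$. Applying the Closure Theorem (tool (i)) on the left and on each term on the right turns this into $V(L \cap \mathbb{C}[x]) = \bigcup_S V(L_S \cap \mathbb{C}[x])$, and by tool (ii) the right-hand side is $V\big(\bigcap_S (L_S \cap \mathbb{C}[x])\big)$. Thus the two ideals $L \cap \mathbb{C}[x]$ and $\bigcap_S (L_S \cap \mathbb{C}[x])$ cut out the same variety in $x$-space, and the Nullstellensatz yields the equality of their radicals, which is the claim.

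The computations above are routine once the framework is set up; the point that needs the most care is the passage between varieties and ideals through elimination. Specifically, one must ensure that the Closure Theorem is applied over an algebraically closed field (hence the insistence on $\mathbb{C}[x]$ rather than $\mathbb{R}[x]$), and one must verify carefully that the $x$-projection of the $S$-th piece of $V(L)$ genuinely coincides with $\pi_x(V(L_S))$ and not merely contains or is contained in it; this is where the augmentation by the redundant equations $\mu_i = 0$, $i \notin S$, must be seen to be invisible to the projection. It is also worth stressing that the radical is genuinely needed, since the closure operations and the Closure Theorem only recover varieties, so one cannot expect the ideals themselves, before taking radicals, to agree.
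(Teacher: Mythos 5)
Your proposal is correct and follows essentially the same route as the paper: both establish $\pi_x(V(L))=\bigcup_{S\subseteq[m]}\pi_x(V(L_S))$ via complementary slackness (matching points of $V(L)$ with points of $V(L_S)$ for the active set $S$, extending multipliers by zero in the reverse direction), then pass to elimination ideals with the Closure Theorem and conclude equality of radicals from the Nullstellensatz. Your explicit decomposition of $V(L)$ by distributing $\bigcap_i(V(\mu_i)\cup V(G_i))$ is a slightly more systematic phrasing of the paper's pointwise argument, but the content is the same.
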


\begin{proof}
We will show that $V(L \cap \mathbb{C}[x])=V(\cap_{S \subseteq [m]} (L_S \cap \mathbb{C}[x]))$. Then it will follow that
\begin{align*}
\sqrt{L \cap \mathbb{C}[x]}=I(V(L \cap \mathbb{C}[x]))=I(V(\bigcap_{S \subseteq [m]} (L_S \cap \mathbb{C}[x])))
=\sqrt{\bigcap_{S \subseteq [m]} (L_S \cap \mathbb{C}[x])}.
\end{align*}
First take an element $(\mu, \lambda, x)$ of $V(L)$. Let $S$ be such that $g_i(x)=0$ for all $i \in S$. Then $(\mu_S, \lambda, x) \in V(L_S)$, where $\mu_S$ is the projection of $\mu$ to the coordinates in $S$. Conversely, let $(\mu_S, \lambda, x) \in V(L_S)$. Let $\mu \in \mathbb{C}^m$ be such that $\mu_i=(\mu_S)_i$ for $i\in S$ and $\mu_i=0$ otherwise. Then $(\mu, \lambda, x) \in V(L)$.

We have shown that $\pi_x (V(L))=\cup \pi_x(V(L_S))$, where $\pi_x$ is the projection of $(\mu, \lambda, x)$ or $(\mu_S, \lambda, x)$ on $x$. By the Closure Theorem~\cite[Theorem 3.2.3]{CoxLittleOShea}, $V(L \cap \mathbb{C}[x])$ is the smallest algebraic variety containing $\pi_x (V(L))$ and $V(L_S \cap \mathbb{C}[x])$ is the smallest algebraic variety containing $\pi_x(V(L_S))$. The inclusion $V(L \cap \mathbb{C}[x]) \subseteq \cup V(L_S \cap \mathbb{C}[x])$ holds, because the right hand side is a variety and contains $\cup \pi_x(V(L_S))$ and hence $\pi_x (V(L))$. On the other hand, since $\pi_x(V(L_S)) \subseteq \pi_x (V(L))$ for every $S$, also $V(L_S \cap \mathbb{C}[x]) \subseteq V(L \cap \mathbb{C}[x])$ for every $S$. Hence $V(L \cap \mathbb{C}[x])=\cup V(L_S \cap \mathbb{C}[x])=V(\cap (L_S \cap \mathbb{C}[x]))$.
\end{proof}

Theorem~\ref{theorem:KKT_decomposition} suggests Algorithm~\ref{algorithm} for solving the equations in the KKT conditions.

\FloatBarrier
\begin{algorithm}[h!]
\caption{$\,\,$ Global maximum of a polynomial optimization problem}\label{algorithm}
\begin{algorithmic}
\State {\bf Input}: An optimization problem
\begin{align*}
&\max F(x)\\
&\text{subject to}\\
&\qquad G_i(x) \geq 0 \text{ for } i=1,\ldots,m,\\
&\qquad H_j(x)=0 \text{ for } j=1,\ldots,l.
\end{align*}
\State \textit{\textbf{Step 1}: Let $\mathcal{C}=\{\}$.}
\State \textit{\textbf{Step 2}: For every $S \subseteq [m]$, if $\dim(L_S)=0$ find $V(L_S)$ and add all its elements to $\mathcal{C}$.}
\State \textit{\textbf{Step 3}: Remove the elements of $\mathcal{C}$ that are not real or do not satisfy $G_i(x) \geq 0 \text{ or } \mu_i \geq 0 \text{ for } i=1,\ldots,m$.}
\State \textit{\textbf{Step 4}: Find the element $(\mu_S^*,\lambda^*,x^*)$ of $\mathcal{C}$ that maximizes $F$.}
\State  {\textbf{Output}: The element $x^*$ from Step 4.}
\end{algorithmic}
\end{algorithm}

\begin{corollary}
If $V(L)$ is finite and the global maximum of the optimization problem~(\ref{optimization problem}) satisfies CRCQ, then Algorithm~\ref{algorithm} outputs a global maximum of the optimization problem~(\ref{optimization problem}).
\end{corollary}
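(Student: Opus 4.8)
The plan is to take a global maximizer $x^*$ (which exists by hypothesis, since we assume \emph{the} global maximum satisfies the CRCQ) and track the associated KKT point through the four steps of Algorithm~\ref{algorithm}, then separately rule out that any enumerated candidate can beat it. First I would invoke the constraint qualification: because $x^*$ is in particular a local optimum and satisfies the CRCQ, the discussion preceding the statement guarantees real multipliers $\mu_i$ and $\lambda_j$ for which $(\mu,\lambda,x^*)$ satisfies the full KKT conditions, so in particular $(\mu,\lambda,x^*)\in V(L)$ with $\mu_i\ge 0$ by~(\ref{inequalities:KKT2}). Setting $S=\{i : G_i(x^*)=0\}$ to be the active set, complementary slackness~(\ref{equations:KKT3}) forces $\mu_i=0$ for $i\notin S$; restricting $\mu$ to its $S$-coordinates then yields a real point $(\mu_S,\lambda,x^*)\in V(L_S)$, via the same correspondence that appears in the proof of Theorem~\ref{theorem:KKT_decomposition}.

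The decisive step is to argue that this point is actually generated in Step~2, that is, that $\dim(L_S)=0$ so the set $V(L_S)$ is computed. This is exactly where the assumption that $V(L)$ is finite is used. The map sending $(\mu_S,\lambda,x)\mapsto(\mu,\lambda,x)$, where $\mu$ extends $\mu_S$ by zeros on the inactive coordinates, is well defined into $V(L)$ by the converse direction of the proof of Theorem~\ref{theorem:KKT_decomposition}, and it is injective because one recovers $\mu_S$ by restriction. Hence $V(L_S)$ embeds into the finite set $V(L)$ and is itself finite, so $\dim(L_S)=0$ and Step~2 adds $(\mu_S,\lambda,x^*)$ to $\mathcal{C}$. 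This candidate then survives Step~3: it is real, $x^*$ is feasible so $G_i(x^*)\ge 0$ for all $i$, and the retained multipliers satisfy $\mu_i\ge 0$.

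It remains to check that Step~4 cannot return a point of strictly larger objective value. Every candidate $(\mu,\lambda,x)$ that survives Step~3 arises from some $V(L_{S'})$, hence satisfies $H_j(x)=0$ for all $j$ because these generators lie in $L_{S'}$, and satisfies $G_i(x)\ge 0$ for all $i$ because this is imposed by the filter in Step~3; therefore every such $x$ is feasible for~(\ref{optimization problem}), and global optimality of $x^*$ gives $F(x)\le F(x^*)$. Since $(\mu_S,\lambda,x^*)$ is itself one of these surviving candidates, the maximum of $F$ over the filtered set $\mathcal{C}$ equals $F(x^*)$ and is attained at a feasible point, so the $x$ returned in Step~4 is a global maximum.

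The main obstacle I anticipate is the middle paragraph: guaranteeing that the candidate associated with $x^*$ is genuinely enumerated rather than discarded by the dimension test in Step~2. This is precisely the role of the finiteness of $V(L)$, since without it the relevant ideal $L_S$ could cut out a positive-dimensional variety that Step~2 skips. By contrast, the necessity of the KKT conditions from the CRCQ and the feasibility of every filtered candidate are routine once the correspondence established in Theorem~\ref{theorem:KKT_decomposition} is available.
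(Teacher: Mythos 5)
Your proof is correct and follows essentially the same route as the paper: CRCQ guarantees $x^*$ is a KKT point, the correspondence from the proof of Theorem~\ref{theorem:KKT_decomposition} places it in $V(L_S)$ for the active set $S$, finiteness of $V(L)$ ensures Step~2 enumerates it, and feasibility of every candidate surviving Step~3 rules out a better output. Your middle paragraph is in fact slightly sharper than the paper's version, since you bound $V(L_S)$ itself (multipliers included) via the injective extension-by-zeros map into $V(L)$, whereas the paper only remarks that the projections $V(L_S \cap \mathbb{C}[x])$ are finite; this directly justifies the dimension test $\dim(L_S)=0$ in Step~2.
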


\begin{proof}
Theorem~\ref{theorem:KKT_decomposition} implies that $V(L \cap \mathbb{C}[x])=\cup V(L_S \cap \mathbb{C}[x])$. The variety $V(L)$ being finite implies that  $V(L \cap \mathbb{C}[x])$ and hence all $V(L_S \cap \mathbb{C}[x])$ are finite. Hence after Step 2, the list $\mathcal{C}$ contains all solutions of the equations~(\ref{equations:KKT1}),~(\ref{equations:KKT2}) and~(\ref{equations:KKT3}) in the KKT conditions. Since the global maximum satisfies the CRCQ, it must be a solution of these equations. By choosing among the real solutions that satisfy inequalities~(\ref{inequalities:KKT1}) and~(\ref{inequalities:KKT2}) in the KKT conditions the one that maximizes the value of the cost function $F$, we get the global optimum.
\end{proof}

We are interested in the optimization problem, when the cost function is the log-likelihood function $l_u$ and the constraints are polynomials that define a statistical model. Although the equations~(\ref{equations:KKT1}) are not automatically polynomial for $F=l_u$, they can be made polynomial by multiplying the equation
$$
\frac{\partial F(x)}{\partial x_k}=\sum_{i=1}^m \mu_i \frac{\partial G_i(x)}{\partial x_k} + \sum_{j=1}^l \lambda_j \frac{\partial H_j(x)}{\partial x_k}
$$
with the variable $x_k$.

One of the reasons why the variety $V(L_S)$ in Step 2 of Algorithm~\ref{algorithm} might not be finite, is that the Lagrange conditions for MLE might be satisfied by higher-dimensional components where some variable is equal to zero. For MLE, Gross and Rodriguez have defined a modification of the Lagrange conditions, known as Lagrange likelihood equations~\cite[Definition 2]{GrossRodriguez}, whose solution set does not contain solutions with some variable equal to zero, if the original data does not contain zeros~\cite[Proposition 1]{GrossRodriguez}. However, the Lagrange likelihood equations can be applied only to homogeneous prime ideals. This motivates us to study Lagrange conditions for decompositions of ideals.

\begin{lemma} \label{lemma:decomposition1}
Assume that the ideal $J=\langle G_i: i = 1,\ldots, m \rangle$ decomposes as $J=J_1 \cap J_2$, where $J_1=\langle G^{(1)}_j: j = 1,\ldots, m_1 \rangle$ and $J_2=\langle G^{(2)}_k: k = 1,\ldots, m_2 \rangle$. If $x^*$ satisfies the Lagrange conditions for the optimization problem max $F(x)$ subject to $G_i(x)=0$ for $i=1,\ldots,m$, then $x^*$ satisfies the Lagrange conditions for the optimization problem max $F(x)$ subject to $G^{(1)}_j(x)=0$ for $j=1,\ldots,m_1$ or for the optimization problem max $F(x)$ subject to $G^{(2)}_k(x)=0$ for $k=1,\ldots,m_2$.
\end{lemma}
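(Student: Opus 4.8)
The plan is to pass between the variety level and the ideal level, using $V(J)=V(J_1)\cup V(J_2)$ to locate $x^*$ and the inclusion $J\subseteq J_1$ to control the gradients. First I would unpack what ``$x^*$ satisfies the Lagrange conditions'' means here: since there are no equality constraints $H_j$, it says that $G_i(x^*)=0$ for all $i$ and that there exist multipliers $\mu_i$ with $\nabla F(x^*)=\sum_{i=1}^m \mu_i \nabla G_i(x^*)$, i.e.\ $\nabla F(x^*)\in\mathrm{span}\{\nabla G_i(x^*)\}$. The decomposition $J=J_1\cap J_2$ gives $V(J)=V(J_1)\cup V(J_2)$, and $G_i(x^*)=0$ for all $i$ means $x^*\in V(J)$. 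Hence $x^*\in V(J_1)$ or $x^*\in V(J_2)$; by symmetry I would treat the case $x^*\in V(J_1)$, so that $G^{(1)}_j(x^*)=0$ for all $j=1,\dots,m_1$. This already establishes the constraint part of the Lagrange conditions for the first subproblem, and the remaining task is the stationarity condition.

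For the stationarity part, the key observation is that $J=J_1\cap J_2\subseteq J_1$, so every generator $G_i$ lies in $J_1$ and can be expanded as $G_i=\sum_{j=1}^{m_1} h_{ij}\,G^{(1)}_j$ for suitable polynomials $h_{ij}$. Differentiating by the product rule yields
$$
\nabla G_i=\sum_{j} (\nabla h_{ij})\,G^{(1)}_j+\sum_{j} h_{ij}\,\nabla G^{(1)}_j .
$$
Evaluating at $x^*$ and using $G^{(1)}_j(x^*)=0$ annihilates the first sum, leaving $\nabla G_i(x^*)=\sum_{j} h_{ij}(x^*)\,\nabla G^{(1)}_j(x^*)$. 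Consequently every $\nabla G_i(x^*)$ lies in $\mathrm{span}\{\nabla G^{(1)}_j(x^*)\}$, so $\mathrm{span}\{\nabla G_i(x^*)\}\subseteq\mathrm{span}\{\nabla G^{(1)}_j(x^*)\}$.

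To conclude, I would combine the two ingredients: from the hypothesis $\nabla F(x^*)\in\mathrm{span}\{\nabla G_i(x^*)\}$, and by the inclusion just proved this forces $\nabla F(x^*)\in\mathrm{span}\{\nabla G^{(1)}_j(x^*)\}$. Thus there exist multipliers $\mu^{(1)}_j$ with $\nabla F(x^*)=\sum_{j} \mu^{(1)}_j\,\nabla G^{(1)}_j(x^*)$, which together with $G^{(1)}_j(x^*)=0$ is precisely the statement that $x^*$ satisfies the Lagrange conditions for $\max F(x)$ subject to $G^{(1)}_j(x)=0$. The symmetric argument covers the case $x^*\in V(J_2)$, giving the disjunction in the statement.

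The main subtlety I expect is the gradient-span inclusion: it is false at a generic point, and holds only because $x^*\in V(J_1)$ makes the correction terms $(\nabla h_{ij})\,G^{(1)}_j$ vanish. Everything else is the routine dictionary between the variety identity $V(J)=V(J_1)\cup V(J_2)$ and the ideal inclusion $J\subseteq J_1$, so this cancellation is really the only point that needs care.
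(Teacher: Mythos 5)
Your proof is correct, and it takes a genuinely different (and arguably cleaner) route than the paper's. The paper replaces the generating set $\{G_i\}$ of $J=J_1\cap J_2$ by the products $\{G^{(1)}_jG^{(2)}_k\}$, writes out the Lagrange conditions for that generating set, and splits into cases according to whether some $G^{(2)}_k(x^*)\neq 0$; in each case the cross terms $\frac{\partial G^{(2)}_k}{\partial x}G^{(1)}_j$ vanish and the surviving coefficients $\lambda^{(1)}_j=\sum_k\lambda_{jk}G^{(2)}_k(x^*)$ are read off as the new multipliers. You instead use the two consequences of $J=J_1\cap J_2$ separately: $V(J)\subseteq V(J_1)\cup V(J_2)$ to place $x^*$ on one of the two components, and the ideal inclusion $J\subseteq J_1$ to write $G_i=\sum_j h_{ij}G^{(1)}_j$ and conclude $\nabla G_i(x^*)\in\mathrm{span}\{\nabla G^{(1)}_j(x^*)\}$ after the correction terms are killed by $G^{(1)}_j(x^*)=0$. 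The underlying mechanism --- cancellation of the extra terms on the subvariety --- is the same, but your version is more robust: the paper's step ``$J=\langle G^{(1)}_jG^{(2)}_k\rangle$'' identifies $J_1\cap J_2$ with the product ideal $J_1J_2$, which in general is only a subideal of $J_1\cap J_2$ (the two agree up to radical but need not be equal), so strictly speaking the paper argues for the product generating set rather than for the given $G_i$; your argument works directly for any generating set of any ideal contained in $J_1$ and sidesteps this. What the paper's computation buys in exchange is the explicit formula for the new multipliers, which your span argument yields only implicitly --- a harmless loss, since only existence of multipliers is asserted.
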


\begin{proof}
Since $J=J_1 \cap J_2$, we have $J=\langle G^{(1)}_j G^{(2)}_k: j=1,\ldots,m_1,k=1,\ldots,m_2 \rangle$. Hence the optimization problem max $F(x)$ subject to $G_i(x)=0$ for $i=1,\ldots,m$ is equivalent to max $F(x)$ subject to $G^{(1)}_j G^{(2)}_k(x)=0$ for $j=1,\ldots,m_1,k=1,\ldots,m_2$. The Lagrange conditions for the latter optimization problem are
\begin{align*}
\frac{\partial F}{\partial x} &= \sum_{j,k} \lambda_{jk} (\frac{\partial G^{(1)}_j}{\partial x} G^{(2)}_k + \frac{\partial G^{(2)}_k}{\partial x} G^{(1)}_j)\\
&=\sum_{j} \frac{\partial G^{(1)}_j}{\partial x} (\sum_k \lambda_{jk} G^{(2)}_k)+\sum_{k} \frac{\partial G^{(2)}_k}{\partial x} (\sum_j \lambda_{jk} G^{(1)}_j),\\
&G^{(1)}_j G^{(2)}_k=0 \text{ for } j=1,\ldots,m_1,k=1,\ldots,m_2.
\end{align*}

If there exists $k$ such that $G^{(2)}_k(x^*) \neq 0$, then we must have $G^{(1)}_j(x^*)=0$ for $j=1,\ldots,m_1$. Hence $x^*$ satisfies
\begin{align*}
\frac{\partial F}{\partial x} &=\sum_{j} \frac{\partial G^{(1)}_j}{\partial x} (\sum_k \lambda_{jk} G^{(2)}_k)+\sum_{k} \frac{\partial G^{(2)}_k}{\partial x} (\sum_j \lambda_{jk} G^{(1)}_j)\\
&=\sum_{j} \frac{\partial G^{(1)}_j}{\partial x} (\sum_k \lambda_{jk} G^{(2)}_k),\\
&G^{(1)}_j=0 \text{ for } j=1,\ldots,m_1.
\end{align*}
Defining $\lambda^{(1)}_j=\sum_k \lambda_{jk} G^{(2)}_k$, we see that $x^*$ satisfies Lagrange conditions for the optimization problem max $F(x)$ subject to $G^{(1)}_j(x)=0$ for $j=1,\ldots,m_1$. Otherwise $G^{(2)}_k(x^*)=0$ for $k=1,\ldots,m_2$ and $x^*$ satisfies Lagrange conditions for the optimization problem max $F(x)$ subject to $G^{(2)}_k(x)=0$ for $k=1,\ldots,m_2$.
\end{proof}

\begin{lemma} \label{lemma:decomposition2}
Let $J=J_1 \cap J_2$ and $K=K_1 \cap K_2$. If $x^*$ satisfies the Lagrange conditions for the optimization problem max $F(x)$ subject to the generators of $J+K$, then $x^*$ satisfies the Lagrange conditions for one of the optimization problems max $F(x)$ subject to the generators of $J_j+K_k$, where $j,k \in \{1,2\}$.
\end{lemma}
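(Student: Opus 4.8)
The plan is to run the case analysis behind Lemma~\ref{lemma:decomposition1} twice, once for the decomposition $J = J_1 \cap J_2$ and once for $K = K_1 \cap K_2$, but carried out \emph{at the level of the Lagrange conditions evaluated at the single point $x^*$} rather than at the level of ideals. I would write $J_1 = \langle G^{(1)}_j \rangle$, $J_2 = \langle G^{(2)}_k \rangle$, $K_1 = \langle H^{(1)}_l \rangle$ and $K_2 = \langle H^{(2)}_m \rangle$. As in the proof of Lemma~\ref{lemma:decomposition1}, the two intersections are presented by products, so the generators of $J + K$ are $\{ G^{(1)}_j G^{(2)}_k \} \cup \{ H^{(1)}_l H^{(2)}_m \}$, and the Lagrange conditions for max $F(x)$ subject to these generators read
\begin{align*}
\nabla F(x^*) &= \sum_{j,k} \lambda_{jk}\bigl( G^{(2)}_k \nabla G^{(1)}_j + G^{(1)}_j \nabla G^{(2)}_k \bigr) + \sum_{l,m} \eta_{lm}\bigl( H^{(2)}_m \nabla H^{(1)}_l + H^{(1)}_l \nabla H^{(2)}_m \bigr),\\
&\phantom{{}={}} G^{(1)}_j G^{(2)}_k (x^*) = 0, \qquad H^{(1)}_l H^{(2)}_m(x^*) = 0,
\end{align*}
with all functions evaluated at $x^*$.

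Next I would apply the dichotomy of Lemma~\ref{lemma:decomposition1} to the first sum. The constraints $G^{(1)}_j G^{(2)}_k(x^*) = 0$ force that either some $G^{(2)}_k(x^*) \neq 0$, in which case $G^{(1)}_j(x^*) = 0$ for all $j$, or else $G^{(2)}_k(x^*) = 0$ for all $k$. In the first case the cross terms $G^{(1)}_j \nabla G^{(2)}_k$ vanish at $x^*$ and the first sum collapses to $\sum_j \mu_j \nabla G^{(1)}_j$ with $\mu_j = \sum_k \lambda_{jk} G^{(2)}_k(x^*)$, which is exactly the Lagrange contribution of the generators of $J_1$; in the second case it collapses to a combination of the $\nabla G^{(2)}_k$, the Lagrange contribution of $J_2$. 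An identical argument applied to the second sum produces the Lagrange contribution of $K_1$ or of $K_2$, and in each case the corresponding generators vanish at $x^*$. Combining the two outcomes, $\nabla F(x^*)$ is a linear combination of the gradients of the generators of $J_j$ and $K_k$ for some $j,k \in \{1,2\}$, while those generators vanish at $x^*$; since the generators of $J_j + K_k$ are exactly the union of the generators of $J_j$ and those of $K_k$, this is precisely the Lagrange condition for max $F(x)$ subject to the generators of $J_j + K_k$.

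The hard part is not the algebra but justifying why one cannot simply invoke Lemma~\ref{lemma:decomposition1} as a black box: ideal addition does not distribute over intersection, so in general $J + K \neq \bigcap_{j,k}(J_j + K_k)$, and there is no single intersection presentation of $J + K$ to which the lemma directly applies. The argument must therefore be run pointwise at $x^*$, and the step I would take most care with is checking that the two case analyses are genuinely decoupled: the dichotomy for the $J$-block uses only the constraints $G^{(1)}_j G^{(2)}_k(x^*) = 0$ and the dichotomy for the $K$-block only the constraints $H^{(1)}_l H^{(2)}_m(x^*) = 0$. Because these are independent, all four combinations $(j,k)$ can occur, and each leads to the Lagrange condition for the corresponding $J_j + K_k$, which is exactly what the statement asserts.
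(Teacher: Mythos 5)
Your proposal is correct and follows essentially the same route as the paper: present $J$ and $K$ by the products of generators of their two components, write out the Lagrange conditions for the generators of $J+K$, and run the dichotomy (some $G^{(2)}_k(x^*)\neq 0$ forces all $G^{(1)}_j(x^*)=0$, or else all $G^{(2)}_k(x^*)=0$) independently on the $J$-block and the $K$-block, absorbing the surviving factors into new multipliers. Your explicit remark that the two case analyses are decoupled, and that one cannot simply invoke Lemma~\ref{lemma:decomposition1} because $J+K$ need not equal $\bigcap_{j,k}(J_j+K_k)$, is a worthwhile clarification of a point the paper leaves implicit ("we get other combinations"), but the underlying argument is identical.
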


\begin{proof}
Assume $J_1=\langle G^{(1)}_j: j = 1,\ldots, m_1 \rangle$, $J_2=\langle G^{(2)}_k: k = 1,\ldots, m_2 \rangle$, $K_1=\langle H^{(1)}_j: j = 1,\ldots, n_1 \rangle$ and $K_2=\langle H^{(2)}_k: k = 1,\ldots, n_2 \rangle$. Then $J=\langle G^{(1)}_j G^{(2)}_k: j=1,\ldots,m_1,k=1,\ldots,m_2 \rangle$ and $K=\langle H^{(1)}_j H^{(2)}_k: j=1,\ldots,n_1,k=1,\ldots,n_2 \rangle$. The Lagrange conditions for the generators of $J+K$ are
\begin{align*}
\frac{\partial F}{\partial x} &= \sum_{j} \frac{\partial G^{(1)}_j}{\partial x} (\sum_k \lambda_{jk} G^{(2)}_k)+\sum_{k} \frac{\partial G^{(2)}_k}{\partial x} (\sum_j \lambda_{jk} G^{(1)}_j)\\
& + \sum_{j} \frac{\partial H^{(1)}_j}{\partial x} (\sum_k \mu_{jk} H^{(2)}_k)+\sum_{k} \frac{\partial H^{(2)}_k}{\partial x} (\sum_j \mu_{jk} H^{(1)}_j),\\
&G^{(1)}_j G^{(2)}_k =0 \text{ for } j=1,\ldots,m_1,k=1,\ldots,m_2,\\
&H^{(1)}_j H^{(2)}_k =0 \text{ for } j=1,\ldots,n_1,k=1,\ldots,n_2.
\end{align*}

If there exists $k_1$ such that $G^{(2)}_{k_1}(x^*) \neq 0$ and $k_2$ such that $H^{(2)}_{k_2}(x^*) \neq 0$, then we must have $G^{(1)}_j(x^*)=0$ for $j=1,\ldots,m_1$ and $H^{(1)}_j(x^*)=0$ for $j=1,\ldots,n_1$. Hence $x^*$ satisfies
\begin{align*}
\frac{\partial F}{\partial x} &= \sum_{j} \frac{\partial G^{(1)}_j}{\partial x} (\sum_k \lambda_{jk} G^{(2)}_k)
 + \sum_{j} \frac{\partial H^{(1)}_j}{\partial x} (\sum_k \mu_{jk} H^{(2)}_k),\\
&G^{(1)}_j =0 \text{ for } j=1,\ldots,m_1,\\
&H^{(1)}_j =0 \text{ for } j=1,\ldots,n_1.
\end{align*}
Defining $\lambda^{(1)}_j=\sum_k \lambda_{jk} G^{(2)}_k$ and $\mu^{(1)}_j=\sum_k \lambda_{jk} H^{(2)}_k$, we see that $x^*$ satisfies Lagrange conditions for the optimization problem max $F(x)$ subject to the generators of $J_1+K_1$. If $G^{(2)}_k(x^*)=0$ for all $k$ and/or $H^{(2)}_k(x^*)=0$ for all $k$, then we get other combinations $J_1+K_2$, $J_2+K_1$ or $J_2+K_2$.
\end{proof}

Lemma~\ref{lemma:decomposition1} suggests that if $S$ is a singleton in Step 2 of Algorithm~\ref{algorithm}, then we can replace the ideal $L_S$ of Lagrange conditions for $I_S$ by the ideals of Lagrange conditions for minimal primes of $I_S$. If $S=\{i_1,\ldots,i_{|S|}\}$, then $I_S=I_{\{i_1\}}+\ldots+I_{\{i_{|S|}\}}$. Hence by Lemmas~\ref{lemma:decomposition1} and~\ref{lemma:decomposition2}, we can replace the ideal $L_S$ by the ideals of Lagrange conditions for the sum of minimal primes of $I_{\{i_j\}}$, where $1\leq j \leq |S|$.

\begin{remark} \label{remark:components_with_zeros} \rm
One can ignore all the components where one of the constraints is $x_k=0$ or the sum of some variables is zero. If one of the variables is zero, then the value of the log-likelihood function is $-\infty$. If the sum of some variables is zero, then all of them have to be zero, because none of them can be negative.
\end{remark}

\begin{remark} \rm
In practice, it is crucial to know the ML degree, i.e. the degree of the ideal of KKT or Lagrange conditions. If the ideal of KKT or Lagrange conditions has a finite number of solutions, then the number of solutions is equal to the ML degree. Although in theory, polynomial homotopy continuation finds all solutions of a system of polynomial equations with probability one, in practice, this can depend on the settings of the program. Without knowing the ML degree, there is no guarantee that any numerical method finds all critical points. For the CFN model on $K_{1,3}$, we experimented with \texttt{Bertini}~\cite{BHSW06}, \texttt{NumericalAlgebraicGeometry} package in \texttt{Macaulay2}~\cite{Leykin} and \texttt{PHCpack}~\cite{Verschelde}. Only \texttt{PHCpack} found all $92$ solutions with initial settings and also running times of different programs differed by several hours.
\end{remark}

\begin{example} \label{example:MLE} \rm
We aim to compute the MLE for the CFN model on $K_{1,3}$ and the data vector $(100,11,85,55, 56, 7, 75, 8)$. To do so, we relax the implicit characterization of the CFN model on $K_{1,3}$ given in Example~\ref{example:CFN_implicit_description} replacing strict inequalities by non-strict inequalities. We apply the modified version of Algorithm~\ref{algorithm} described after Lemma~\ref{lemma:decomposition2}. We automatically remove all ideals that contain a variable $p_{ijk}$, a sum of two or four such variables. The code for this example can be found at the link:
\begin{center}
\texttt{https://github.com/kaiekubjas/phylogenetics}
\end{center}

As a result we obtain $44$ ideals summarized in Table~\ref{table:boundary_components}. The first row of this table corresponds to the Zariski closure of the CFN model on $K_{1,3}$. It has degree $92$ which agrees with the ML degree 92 computed in~\cite[Example 14]{Hosten_etal}. However, to find the MLE one has to consider critical points of the likelihood function in the interior and on all the boundary components, in total $167$ of them. We compute all the $167$ complex critical points using numerical algebraic geometry software \texttt{PHCpack}. Out of the $167$ complex critical points $97$ are real and $49$ are positive. We list the seven points among them that have the highest value of the log-likelihood function in Table~\ref{table:critical_points_with_highest_loglikelihood_values}.

\begin{table}
\begin{center}
\caption[Table caption text]{Table summarizing different boundary components}
\label{table:boundary_components}
  \begin{tabular}{ | c | c | c | }
    \hline
    dim $I$ & degree $L$ & \# of ideals \\ \hline
    5 & 92 & 1 \\ \hline
    4 & 9 & 4 \\ \hline
    4 & 1 & 8 \\ \hline
    3 & 1 & 24 \\ \hline
    2 & 1 & 6 \\ \hline
    1 & 1 & 1 \\ \hline \hline
    Total & 167 & 44 \\
    \hline
  \end{tabular}
  \end{center}
  \end{table}

\begin{table}
\begin{center}
\caption[Table caption text]{Critical points with highest values of the log-likelihood function}
\label{table:critical_points_with_highest_loglikelihood_values}
\begin{tabular}{|c|c|c|}
\hline
  {\bf p} & $l_u$ & failure\\ \hline
  (0.214,0.067,0.238,0.092,0.158,0.012,0.183,0.036) & -0.0726 &  (\ref{ineq:6})\\ \hline
  (0.216,0.066,0.247,0.090,0.153,0.012,0.180,0.036) & -0.0726 &   (\ref{ineq:6}),(\ref{ineq:7}),(\ref{ineq:10}),(\ref{ineq:11}),(\ref{ineq:13}) \\ \hline
 (0.233,0.083,0.233,0.083,0.165,0.019,0.165,0.019) & -0.0729 &  (\ref{ineq:12}*) \\ \hline
 (0.231,0.084,0.231,0.084,0.166,0.019,0.166,0.019) & -0.0729 &  (\ref{ineq:12}*) \\ \hline
 (0.221,0.057,0.283,0.073,0.128,0.033,0.164,0.042) & -0.0734 &  (\ref{ineq:6}),(\ref{ineq:7}),(\ref{ineq:10}),(\ref{ineq:11}),(\ref{ineq:13}) \\ \hline
 (0.208,0.042,0.173,0.077,0.235,0.015,0.199,0.051) & -0.0737 &  (\ref{ineq:9})\\ \hline
 (0.252,0.065,0.252,0.065,0.146,0.038,0.146,0.038) & -0.0737 & MLE \\
  \hline
  \end{tabular}
  \end{center}
\end{table}

The first, second, fifth and sixth do not satisfy the relaxed inequalities and hence are not in the model. The third and fourth satisfy the inequalities of the relaxed model and hence the third one is the MLE of the model we defined at the beginning of this example. However, neither of them come from a parametrization by rate matrices. Furthermore, neither of them come from a parametrization by rate matrices where time is allowed to go to infinity. This would imply that some $\check{f}^{(e)}(g)$ and $q_{ijk}$ are zero. For this reason we have to consider a larger set of inequalities than in Proposition~\ref{proposition:left_inverse1}. Instead of picking any $j$ and $k$ such that $v$ is on the path between them, we have to consider all such $j$ and $k$. Although the third and fourth solution both satisfy in the Fourier coordinates the inequality
$$
q_{000} q_{110}-q_{100} q_{010} \geq 0,
$$
they both fail the inequality
$$
q_{000} q_{101}-q_{100} q_{001} \geq 0,
$$
which is obtained by choosing a different $k$ in Proposition~\ref{proposition:left_inverse1}. If none of $q_{ijk}$ would be zero, then if one of the two inequalities is satisfied, also the other one is satisfied. This does not have to be true anymore when some $q_{ijk}=0$.

The seventh critical point is in the image of the following parameters:
\begin{align*}
&\psi^{(e_{\text{root}})}=(0,0),\psi^{(e_1)}=(-0.665,0.665),\\
&\psi^{(e_2)}=(-\infty,\infty),\psi^{(e_3)}=(-0.262,0.262).
\end{align*}
This implies that the MLE for the CFN model on $K_{1,3}$ and the data vector $(100,11,85,55, 56, 7, 75, 8)$ does not exist -- the global maximum of the log-likelihood function is achieved when we allow one of the parameters to go to infinity. Strictly speaking this statement is true for the set of points in the model that satisfy CRCQ. We believe that for random data the global maximum will satisfy CRCQ with probability one. When we run the same optimization problem in Mathematica, then we get a solution with similar value for the log-likelihood function and all parameters besides $\psi^{(e_2)}$, which is equal to $\psi^{(e_2)}=(-8.983,8.983)$. Without having the implicit description of the CFN model on $K_{1,3}$ and using numerical algebraic geometry to study the MLE, it would be very difficult to say that the MLE does not exist.
\end{example}

\bigskip

\begin{small}

\noindent
{\bf Acknowledgments.}

We thank Elizabeth Allman, Taylor Brysiewicz, Marta Casanellas, Jes\'us Fern\'andez-S\'anchez, Serkan Hosten, Jordi Roca-Lacostena, Bernd Sturmfels, and Piotr Zwiernik for helpful discussions and comments on the first version of this paper.

\smallskip

\end{small}




\end{document}